\newtheorem{theorem}{Theorem}[section]
\newtheorem{corollary}[theorem]{Corollary}
\newtheorem{lemma}[theorem]{Lemma}
\newtheorem{example}[theorem]{Example}
\newtheorem{remark}[theorem]{Remark}
\numberwithin{equation}{section}
\def\vph{\varphi} 
\begin{document}
\title{Self-dual $2$-quasi-cyclic Codes and Dihedral Codes}
\author{
Yun Fan\par
{\small School of Mathematics and Statistics}\par\vskip-1mm
{\small Central China Normal University, Wuhan 430079, China}
\par\vskip3mm
Yuchang Zhang\par
{\small College of Mathematics and Statistics}\par\vskip-1mm
{\small Hubei Normal University, Huangshi 435002, China}
}
\date{}
\maketitle

\insert\footins{\footnotesize{\it Email address}:
yfan@mail.ccnu.edu.cn (Yun Fan);
}

\begin{abstract}
We characterize the structure of $2$-quasi-cyclic codes over a finite field~$F$ 
by the so-called Goursat Lemma.
With the characterization, we exhibit a necessary and sufficient condition for
a $2$-quasi-cyclic code being a dihedral code. And we obtain 
a necessary and sufficient condition for a self-dual $2$-quasi-cyclic code
being a dihedral code (if ${\rm char} F =2$), 
or a consta-dihedral code (if ${\rm char} F\ne 2$).
As a consequence, any self-dual $2$-quasi-cyclic code generated by one element
must be (consta-)dihedral. In particular, 
any self-dual double circulant code must be (consta-)dihedral.
Also, we show a necessary and sufficient condition that the three classes 
(the self-dual double circulant codes,  
 the self-dual $2$-quasi-cyclic codes,  
 and the self-dual (consta-)dihedral codes)   
are coincide each other.

\medskip
{\bf Key words}: Finite fields; self-dual codes; double circulant codes; 
2-quasi-cyclic codes; dihedral group codes; consta-dihedral codes. 
\end{abstract}

\section{Introduction}

In this paper $F$ is always a finite field with cardinality $|F|=q$
which is a power of a prime. For any set $S$ we denote $|S|$ 
the cardinality of $S$. 
And $n>1$ is always an integer coprime to $q$. 

Any linear subspace $C$ of $F^n$, denoted $C\le F^n$, 
is called a linear code of length $n$ over $F$. 
in $F^n$ the inner product of $a=(a_0, a_1,\cdots,a_{n-1})$ and
$b=(b_0,b_1,\cdots,b_{n-1})$ is defined to be 
$\langle a,b\rangle=\sum_{i=0}^{n-1}a_ib_i$. 
Then the self-orthogonal codes, self-dual codes, LCD codes etc. 
are defined as usual, e.g., cf. \cite{HP}.

An $n\times n$ matrix over $F$ is said to be {\em circulant} 
if any row of the matrix is the cyclic shift of the previous row. 
An $n\times 2n$ matrix over $F$ is said to be {\em double circulant} 
if it is a concatenation side by side $(A~B)$ of two 
$n\times n$ circulant matrices $A$ and $B$. If $A$ is of full-rank, 
then $A^{-1}\!(A~B)=(I~A^{\!-1}\!B)$, where $I$ denotes the identity matrix 
and $A^{-1}\!B$ is still a circulant matrix.

A linear code $C\le F^{2n}$ is said to be 
{\em double circulant} if $C$ has a double circulant matrix $(I~A)$
as its generating matrix, e.g., see \cite{AOS}, \cite{TTHAA}. 

A linear code $C\le F^{n}$ is called a {\em cyclic code}
if it is invariant by the following cycle permutation on bits: 
$$
(a_0,a_1,\cdots,a_{n-1})
\mapsto (a_{n-1},a_0,\cdots,a_{n-2}).
$$  
A linear code $C\le F^{2n}$ is said to be 
{\em quasi-cyclic of index $2$}, {\em $2$-quasi cyclic} for sort, 
if~$C$ is invariant by the following two-cycle permutation on bits
$$
(a_0,a_1,\cdots,a_{n-1},\;b_0,b_1,\cdots,b_{n-1})
\mapsto (a_{n-1},a_0,\cdots,a_{n-2},\;b_{n-1},b_0,\cdots,b_{n-2}).
$$  
So double circulant codes are 2-quasi-cyclic codes; but 
the converse is not true.

In this paper we always assume that 
$H=\{1,x,\cdots,x^{n-1}\}$ with $x^n=1$ is a cyclic group of order $n$.
By $FH$ we denote the group algebra, i.e., $FH$ is an $F$-vector space
with basis $H$ and with multiplication induced by the group multiplication of $H$. 
So, $FH\cong F[X]/\langle X^n-1\rangle$. 
We identify any element $a=a(x)=a_0+a_1x+\cdots+a_{n-1}x^{n-1}\in FH$
 with the word $(a_0,a_1,\cdots,a_{n-1})\in F^n$.
Then the ideals of $FH$
(i.e., $FH$-submodules of the regular module $FH$) 
are just cyclic codes of length $n$ over~$F$.
Further, the outer direct sum $(FH)^2:=FH\times FH=\{(a,b)\,|\,a,b\in FH\}$ 
is an $FH$-module. Any $FH$-submodule of $(FH)^2$ is clearly 
a $2$-quasi cyclic code; and vice versa.
 
Let $T=\{1,y\}$ be a cyclic group of order $2$, 
and $G=H\rtimes T$ be the semidirect product with relation $yxy^{-1}=x^{-1}$;
i.e., $G$ is a dihedral group of order $2n$ with the cyclic subgroup $H$ of index $2$.
Then we have the group algebra $FG$ which is not commutative. 
Any left ideal of $FG$ ($FG$-submodule of the left regular module $FG$)  
is called a {\em dihedral code} of length $2n$ over $F$.

The dihedral group algebra $FG$ has $FH$ as a subalgebra. 
As $FH$-modules, $F\!G\!=\!F\!H\oplus F\!H y\cong F\!H\times F\!H$, 
where $F\!H\oplus F\!H y=\{a(x)+b(x)y\,|\,a(x),b(x)\in F\!H\}$ 
 denotes the inner direct sum. 
Therefore, any dihedral code is a 2-quasi cyclic code.  
The converse is clearly incorrect.

Recently, \cite{AOS} exhibits an interesting result: 
if the characteristic ${\rm char}F=2$, any self-dual double circulant code 
is a dihedral code. 
In fact, we'll show that it is still true for odd characteristic 
provided ``dihedral code'' is replaced  by
``consta-dihedral code'' (see Corollary \ref{principal self-dual odd} below). 

However, self-dual $2$-quasi-cyclic codes are not dihedral in general.
On the other hand, there are self-dual $2$-quasi-cyclic codes which are dihedral codes, 
but not double circulant codes.   

\begin{example}\label{counterexample}\rm 
Take $F=F_4=\{0,1,\omega,\omega^2\}$, where $1+\omega+\omega^2=0$. 
Take $n=3$,  $H=\{1,x,x^2\}$ with $x^3=1$, In $FH$ we take 
$$
e_0(x)=1+x+x^2, ~~~ e_1(x)=1+\omega x+\omega^2 x^2, ~~~ 
e_1'(x)=1+\omega^2 x+\omega x^2.
$$
In $FH\times FH$ we set
\begin{equation*}\begin{array}{l}
\widehat e_0=\big(e_0(x),\,e_0(x)\big)=(1,1,1,~1,1,1), \\
 g_1=\big(e_1(x),\,0\big)=(1,\omega,\omega^2,~0,0,0),  \\
 g_2=\big(0,\,e_1(x)\big)=(0,0,0,~1,\omega,\omega^2),\\
 g'_2=\big(0,\,e_1'(x)\big)=(0,0,0,~1,\omega^2,\omega).
\end{array}\end{equation*}

(1). 
Let $C=FH\widehat e_0+FHg_1+FHg_2$, which is an 
$F\!H$-submodule of $F\!H\!\times\! F\!H$.
It is easy to check that $\widehat e_0,g_1,g_2$ form an $F$-basis of $C$, hence
$$
 A=\begin{pmatrix}
  1&1&1&1&1&1\\
  1&\omega&\omega^2&0&0&0\\
  0&0&0&1&\omega&\omega^2
\end{pmatrix}
$$
is a generating matrix of $C$. Since 
$AA^T=0$ where $A^T$ denotes the transpose of $A$, $C$ is self-dual. 
Let $T=\{1,y\}$,  $G=H\rtimes T$ be the dihedral group of order~$6$ and
$FG=\{(a(x)+b(x)y\,|\,a(x),b(x)\in FH)\}$ as above. 
Thinking of that $\widehat e_0=e_0(x)+e_0(x)y$, 
$g_1=e_1(x)+0y$ and $g_2=0+e_1(x)y$, 
we can view $C$ as an $F$-subspace of $FG$. 
But $C$ is not an $FG$-submodule (left ideal) of $FG$,
because $yg_1=ye_1(x)=(1+\omega^2x+\omega x^2)y=e_1'(x)y\notin C$; 
i.e., $C$ is not dihedral. 

(2). 
 $C=FH\widehat e_0+FHg_1+FHg'_2$ is an $FH$-submodule of $FH\times FH$.
And $\widehat e_0,g_1,g'_2$ are an $F$-basis of $C$.
The following $A$ is a generating matrix of $C$: 
$$
 A=\begin{pmatrix}
  1&1&1&1&1&1\\
  1&\omega&\omega^2&0&0&0\\
  0&0&0&1&\omega^2&\omega
\end{pmatrix}.
$$
Since $AA^T=0$,  $C$ is a self-dual $2$-quasi-cyclic code.
This time, it is easy to check that $C$ is a dihedral code, 
but not a double circulant code 
(see Example~\ref{even examples}(2) for more explanations).
\end{example}

We are concerned with the exact relationship between the three classes of codes:
double circulant codes, $2$-quasi-cyclic codes and dihedral codes.

Question 1: what $2$-quasi-cyclic codes are double circulant codes? 

Question 2: what $2$-quasi cyclic codes are dihedral codes?

\medskip\noindent
And a special concern is about the self-dual ones
(e.g., \cite{AOS}, \cite{AGOSS}, \cite{MW},\cite{Musa}):

Question 3: what self-dual $2$-quasi-cyclic codes are (consta-)dihedral codes? 

Question 4: in what case (about $n$ and $q$) any self-dual $2$-quasi-cyclic code
is a (consta-)dihedral code? 

\medskip
The key idea of this paper is to characterize 
$2$-quasi-cyclic codes by the so-called Goursat Lemma; and apply
the characterization to solve these questions. 

In Section \ref{preliminaries}, we sketch preliminaries about $FH$. 
In Section~\ref{2-quasi-cyclic}, with Goursat Lemma we characterize 
the structure of $2$-quasi-cyclic codes 
(Theorem \ref{Goursat C} and its corollaries),
and answer Questions 1 and Question 2 (in Theorem~\ref{generator matrix}
and Theorem~\ref{2-quasi dihedral}).
Section~\ref{self-dual 2-quasi-cyclic} is devoted to 
a characterization of the self-duality of $2$-quasi-cyclic codes 
(Theorem~\ref{self-dual C}).
In Section~\ref{self-dual dihedral}, 
assuming that the characteristic of $F$ is even,
we answer Question~3 in Theorem~\ref{t char even}, 
and answer Question~4 by Theorem~\ref{c char even}, 
which (and Theorem~\ref{c char odd} below) are interestingly 
related to a classical question ``in what case any cyclic code is LCD?''
Finally, in Section~\ref{odd case}, we consider the case of odd characteristic.
Instead of dihedral codes,
 the consta-dihedral codes are considered; and 
the questions considered above are solved in a similar way.

\section{Preliminaries}\label{preliminaries}

Let $F$ be a finite field with $|F|=q$ as before, 
and $G$ be a finite group of order~$n$. 
Let $FG=\big\{\sum_{x\in G}a_x x\,\big|\, a_x\in F\big\}$ be the group algebra
(with the multiplication induced by the group multiplication of $G$). 
Any left ideal $C$ of $FG$, denoted by $C\le FG$, is called an {\em $FG$-code}.

The map $x\mapsto x^{-1}$ for $x\in G$ 
is an anti-automorphism of the group $G$, 
where~$x^{-1}$ denotes the inverse of $x$.  
We have an anti-automorphism of the algebra $FG$: 
\begin{equation}\label{bar map}\textstyle
 FG\longrightarrow FG,~~ \sum_{x\in G}a_x x
 \longmapsto \sum_{x\in G}a_x x^{-1}.
\end{equation}
We denote $\sum_{x\in G}a_x x^{-1}
 =\overline{\sum_{x\in G}a_x x}$,   
and call Eq.\eqref{bar map} the ``bar'' map of $FG$ for convenience. 
So, $\overline{\overline a}=a$, $\overline{a+b}=\overline a+\overline b$, 
$\overline{ab}=\overline b\,\overline a$
for $a,b\in FG$.
Note that the ``bar'' map is an automorphism of $FG$ 
once $G$ is abelian. 

The following is a linear form of $FG$:
$$\textstyle
\sigma:~ FG\longrightarrow F,~~ 
\sum_{x\in G}a_x x\longmapsto a_{1_G} ~~~
(\mbox{$1_G$ is the identity of $G$}).
$$

\begin{lemma}[{\cite[Lemma II.4]{FL20}}]\label{sigma bar}
{\bf(1)} $\sigma(ab)=\sigma(ba)$, $\forall~a,b\in FG$.

{\bf(2)} $\langle a, b\rangle=\sigma(a\overline b)=\sigma(\overline a b)$, 
          $\forall~a,b\in FG$.

{\bf(3)} $\langle d\,a,\,b\rangle=\langle a,\,\overline d\, b\rangle$, 
          $\forall~a, b,d\in FG$.

{\bf(4)} If $C$ is an $FG$-code, then so is $C^\bot$.

{\bf(5)} For $FG$-codes $C$ and $D$, 
 $\langle C,D\rangle=0$ if and only if $C\overline D=0$, 
where $C\overline D:=\big\{c\,\overline d\,\big|\,c\in C,d\in D\big\}$.
\end{lemma}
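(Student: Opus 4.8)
The plan is to work through the five parts roughly in order, since each later part rests on the earlier ones. Parts (1) and (2) are pure bookkeeping with coefficients. Writing $a=\sum_{x\in G}a_x x$ and $b=\sum_{x\in G}b_x x$, the coefficient of $1_G$ in $ab$ is $\sum_{xy=1_G}a_xb_y=\sum_{x\in G}a_xb_{x^{-1}}$, and the same sum (after the substitution $x\mapsto x^{-1}$) is the coefficient of $1_G$ in $ba$; this gives (1). For (2), the coefficient of $1_G$ in $a\overline b=\sum_{x,y}a_xb_y\,xy^{-1}$ is $\sum_{xy^{-1}=1_G}a_xb_y=\sum_{x\in G}a_xb_x=\langle a,b\rangle$, which is the first equality; the second, $\sigma(a\overline b)=\sigma(\overline a b)$, then follows from (1) applied to $\overline a$ and $b$ together with the symmetry of $\langle\cdot,\cdot\rangle$.

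Part (3) I would derive formally, with no further coefficient computation: by (2), $\langle da,b\rangle=\sigma(da\overline b)$, while $\langle a,\overline d b\rangle=\sigma\big(a\,\overline{\overline d b}\big)=\sigma(a\overline b\,d)$ using $\overline{\overline d b}=\overline b\,\overline{\overline d}=\overline b d$ and $\overline{\overline d}=d$; now (1) identifies $\sigma(da\overline b)$ with $\sigma(a\overline b\,d)$. For (4), to show $C^\bot$ is a left ideal it suffices to check $db\in C^\bot$ whenever $d\in FG$ and $b\in C^\bot$; for any $a\in C$, using symmetry of the inner product and (3), $\langle a,db\rangle=\langle db,a\rangle=\langle b,\overline d a\rangle$, and $\overline d a\in C$ because $C$ is a left ideal, so this vanishes.

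Part (5) is the place I expect to need the one genuinely new idea, namely the shift identity $\sigma(g z)=z_{g^{-1}}$ for $z=\sum_h z_h h$ and $g\in G$ (multiplying on the left by $g$ moves the coefficient of $g^{-1}$ into position $1_G$). The direction $C\overline D=0\Rightarrow\langle C,D\rangle=0$ is immediate from (2). Conversely, assume $\langle C,D\rangle=0$; fix $c\in C$, $d\in D$, and for every $g\in G$ note $gc\in C$, hence $0=\langle gc,d\rangle=\sigma\big((gc)\overline d\big)=\sigma\big(g(c\overline d)\big)=(c\overline d)_{g^{-1}}$ by the shift identity. As $g$ ranges over $G$ this kills every coefficient of $c\overline d$, so $c\overline d=0$, giving $C\overline D=0$. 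The only subtlety to keep straight is the consistent use of the left-ideal hypothesis on both $C$ (to absorb $g$, resp.\ $\overline d$) and the anti-automorphism rule $\overline{ab}=\overline b\,\overline a$ when rearranging bars; once those are handled carefully the argument is routine.
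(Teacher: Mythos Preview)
Your argument is correct in all five parts; the coefficient computations in (1)--(2), the formal deductions in (3)--(4), and the shift identity $\sigma(gz)=z_{g^{-1}}$ in (5) are all handled cleanly and with the right care about the anti-automorphism rule and the left-ideal hypotheses. Note, however, that the paper does not actually prove this lemma: it is quoted verbatim from \cite[Lemma II.4]{FL20} and stated without proof, so there is no in-paper argument to compare yours against.
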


Assume that $H=\langle x\rangle=\{1,x,\cdots,x^{n-1}\}$ 
is a cyclic group of order $n$, and assume that $\gcd(n,q)=1$ as before. 
The group algebra
$FH=\big\{a=a(x)=\sum_{i=0}^{n-1}a_ix^i\;|\; a_i\in F\big\}$, and
$$
 F[X]\big/\langle X^n-1\rangle \cong FH, ~
 a(X)\mapsto a(x).
$$
Thus $FH$-codes are cyclic codes.  
Each cyclic code $C\le F[X]\big/\langle X^n-1\rangle$ 
is determined by generating polynomial $g(X)$,
and by check polynomial $h(X)$ also, where $h(X)g(X)=X^n-1$.
And,  $C\le FH$ is determined by an idempotent $e$ (i.e., $e^2=e$)
as follows: $c\in C$ if and only if $ec=c$; 
and also determined by the complementary idempotent $e'$ 
(satisfying that $ee'=0$ and $e+e'=1$):  $c\in C$ if and only if $e'c=0$.
Further, since $\gcd(n,q)=1$, 
$FH$ is semisimple and is uniquely decomposed 
into a direct sum of irreducible submodules (irreducible ideals) 
$FH e_i$ (the corresponding idempotents $e_i$ are said to be {\em primitive}):
\begin{equation}\label{FH=}
  FH=FHe_0\oplus FHe_1\oplus\cdots\oplus FH e_s. 
\end{equation}
Note that the identity $1=1_H$ of $H$ is also the identity of the algebra $FH$. 
The set $E=\big\{ e_0=\frac{1}{n}\sum_{i=0}^{n-1}x^i,~ e_1,~\cdots,~e_s\big\}$
of all primitive idempotents of $FH$ satisfies that
\begin{equation}\label{idempotent}
\textstyle
 1=e_0+e_1+\cdots+e_s; 
\qquad e_ie_j=\begin{cases} e_i, &i=j;\\ 0, & i\ne j. \end{cases}
\end{equation}
Please refer to \cite[Chapter 4]{HP} for details.
For any $a\in FH$,  $a=a1=\sum_{i=0}^sae_i$, where
$ae_i\in FHe_i$ is called the $e_i$-component of $a$.
Then the submodules of $FH$ can be characterized by the subsets of $E$
as follows.

\begin{lemma}\label{C e_C}
For any $C\le FH$, set $E_C=\{ \epsilon\in E\,|\, C\epsilon \ne 0\}$ 
and $e_C=\sum_{\epsilon\in E_C}\epsilon$. 
Then $C=\bigoplus_{\epsilon\in E_C}FH\epsilon=FHe_C$, 
which is a commutative ring with identity $e_C$.
\end{lemma}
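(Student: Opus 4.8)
The plan is to run everything through the semisimple decomposition \eqref{FH=} together with the irreducibility of the summands $FHe_i$. First I would note that, since $H$ is abelian, $FH$ is commutative, so an $FH$-submodule $C\le FH$ is the same thing as an ideal of $FH$; in particular $C$ is closed under multiplication. Using the decomposition of identity $1=e_0+e_1+\cdots+e_s$ from \eqref{idempotent}, I would write $C=C\cdot 1=\sum_{i=0}^s Ce_i$, and observe that each $Ce_i$ is an $FH$-submodule of $FHe_i$.

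The key (and essentially only) point is then: because each $FHe_i$ is irreducible, its submodule $Ce_i$ is either $0$ or the whole of $FHe_i$, and by definition of $E_C$ the case $Ce_i\ne 0$ occurs precisely when $e_i\in E_C$. Hence $C=\sum_{\epsilon\in E_C}FH\epsilon$, and this sum is direct because it is a partial sum of the direct sum in \eqref{FH=}; so $C=\bigoplus_{\epsilon\in E_C}FH\epsilon$. (For the inclusion $\bigoplus_{\epsilon\in E_C}FH\epsilon\subseteq C$ one just notes that $\epsilon\in E_C$ forces $C\epsilon=FH\epsilon$, whence $FH\epsilon\subseteq C$.) Next I would identify this with $FHe_C$: for any $a\in FH$, orthogonality \eqref{idempotent} gives $ae_C=\sum_{\epsilon\in E_C}a\epsilon$, so $FHe_C=\bigoplus_{\epsilon\in E_C}FH\epsilon=C$.

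It remains to check the ring claim. Since $C$ is an ideal of the commutative ring $FH$, it is a commutative subring. By \eqref{idempotent}, $e_C^2=\sum_{\epsilon,\epsilon'\in E_C}\epsilon\epsilon'=\sum_{\epsilon\in E_C}\epsilon=e_C$, so $e_C$ is idempotent; and for $c\in C=FHe_C$, writing $c=ae_C$ gives $ce_C=ae_Ce_C=ae_C=c$, so $e_C$ acts as identity on $C$. I expect no real obstacle here — the whole argument is bookkeeping with orthogonal idempotents — the one substantive ingredient is invoking the irreducibility of the $FHe_i$ (valid since $\gcd(n,q)=1$) to obtain the dichotomy $Ce_i\in\{0,FHe_i\}$.
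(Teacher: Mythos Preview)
Your proof is correct and follows essentially the same approach as the paper: both use the irreducibility of each $FHe_i$ to get the dichotomy $Ce_i\in\{0,FHe_i\}$, deduce $C=\bigoplus_{\epsilon\in E_C}FH\epsilon=FHe_C$ from the decomposition $1=\sum_i e_i$, and then verify that $e_C$ acts as identity via $ce_C=ae_Ce_C=ae_C=c$. Your write-up is slightly more explicit about the directness of the sum and the idempotency of $e_C$, but there is no substantive difference.
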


\begin{proof}
If $\epsilon\in E_C$, then $0\ne C\epsilon \subseteq FH\epsilon$;
since $FH\epsilon$ is a simple ideal, we have $C\epsilon= FH\epsilon$.
Thus $FHe_C\subseteq C$. 
For any $c\in C$, if $\epsilon\in E-E_C$ then $c\epsilon=0$; 
so $c=\sum_{i=0}^s ce_i=\sum_{\epsilon \in E_C}c\epsilon \in FH e_C$. 
Thus $C\subseteq FHe_C$. We get $C= FHe_C$.
And, for $c\in C$, $c=ae_C$ for some $a\in FH$; then
$ce_C=ae_Ce_C=ae_C=c$. So $C$ is a ring with identity $e_C$.
\end{proof}

Since the ``bar'' map in Eq.\eqref{bar map} is an automorphism of $FH$ of order $2$, it permutes the primitive idempotents; i.e.,
$\overline\epsilon\in E$ for any $\epsilon\in E$.

\begin{lemma}\label{C C'}
Let $C,C'\le FH$. Denote 
$\overline E_{C}=\{\overline\epsilon\,|\,\epsilon\in E_{C}\}$.
Then:

{\rm(1)} $E_{C\cap C'}=E_C\cap E_{C'}$,  and $e_{C\cap C'}\!=\!e_Ce_{C'}$.

{\rm(2)} $E_{C+C'}=E_C\cup E_{C'}$, and $e_{C+C'}=e_C+e_{C'}\!-\!e_Ce_{C'}$. 
 
{\rm(3)} $E_{\overline C}=\overline E_C$, 
  and $e_{\overline C}=\overline e_C=\sum_{\epsilon\in\overline E_C}\epsilon$.
\end{lemma}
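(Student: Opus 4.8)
The plan is to prove each of the three parts by reducing everything to the behaviour of submodules on the level of primitive idempotents, using Lemma \ref{C e_C} as the main bridge. Recall that Lemma \ref{C e_C} tells us that a submodule $C\le FH$ is completely determined by the set $E_C\subseteq E$, via $C=\bigoplus_{\epsilon\in E_C}FH\epsilon$, and that $e_C=\sum_{\epsilon\in E_C}\epsilon$ is the identity of $C$. So to prove each formula for the idempotent it suffices to identify the corresponding set in $E$, and conversely to identify the set it suffices to know the idempotent; I will use whichever direction is more convenient in each part.

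For part (1): First I would show $E_{C\cap C'}=E_C\cap E_{C'}$. If $\epsilon\in E_C\cap E_{C'}$, then by Lemma \ref{C e_C} the simple ideal $FH\epsilon$ is contained in both $C$ and $C'$, hence in $C\cap C'$, so $(C\cap C')\epsilon=FH\epsilon\neq0$ and $\epsilon\in E_{C\cap C'}$. Conversely if $\epsilon\notin E_C$, say, then $C\epsilon=0$, so $(C\cap C')\epsilon\subseteq C\epsilon=0$ and $\epsilon\notin E_{C\cap C'}$; this gives the reverse inclusion. Then $e_{C\cap C'}=\sum_{\epsilon\in E_C\cap E_{C'}}\epsilon$, and using the orthogonality relations \eqref{idempotent} one checks directly that $e_Ce_{C'}=\bigl(\sum_{\epsilon\in E_C}\epsilon\bigr)\bigl(\sum_{\delta\in E_{C'}}\delta\bigr)=\sum_{\epsilon\in E_C\cap E_{C'}}\epsilon$, since cross terms with $\epsilon\neq\delta$ vanish. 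Part (2) is entirely parallel: $C+C'=\bigl(\bigoplus_{\epsilon\in E_C}FH\epsilon\bigr)+\bigl(\bigoplus_{\epsilon\in E_{C'}}FH\epsilon\bigr)=\bigoplus_{\epsilon\in E_C\cup E_{C'}}FH\epsilon$, so $E_{C+C'}=E_C\cup E_{C'}$; then $e_{C+C'}=\sum_{\epsilon\in E_C\cup E_{C'}}\epsilon$, and the inclusion–exclusion identity $\sum_{E_C\cup E_{C'}}=\sum_{E_C}+\sum_{E_{C'}}-\sum_{E_C\cap E_{C'}}$ combined with the computation of $e_Ce_{C'}$ from part (1) yields $e_{C+C'}=e_C+e_{C'}-e_Ce_{C'}$.

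For part (3): The key point is that the ``bar'' map is an $F$-algebra automorphism of $FH$ of order $2$ (noted just before the lemma), hence it permutes the set $E$ of primitive idempotents; so $\overline C=\{\overline c\mid c\in C\}$ is again a submodule of $FH$, and $\overline C=\overline{FHe_C}=FH\,\overline{e_C}$ since $\overline{FH}=FH$. Now $\overline{e_C}=\overline{\sum_{\epsilon\in E_C}\epsilon}=\sum_{\epsilon\in E_C}\overline\epsilon=\sum_{\delta\in\overline E_C}\delta$ is an idempotent (being the image of an idempotent under an automorphism), and it is the identity of $\overline C$; by the uniqueness in Lemma \ref{C e_C} we conclude $e_{\overline C}=\overline{e_C}$ and $E_{\overline C}=\overline E_C$.

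I do not expect a serious obstacle here; the only thing to be careful about is making sure the idempotent computations use the orthogonality \eqref{idempotent} correctly and that in part (3) one genuinely invokes that $\overline{\phantom{e}}$ is an automorphism (not merely an anti-automorphism) — which holds precisely because $H$ is abelian — so that it carries idempotents to idempotents and submodules to submodules. A minor point is to phrase the direct-sum manipulation in part (2) so that it is clear the two families of simple ideals are being merged into a single direct sum indexed by the union $E_C\cup E_{C'}$; this is immediate from the global decomposition \eqref{FH=} but worth stating cleanly.
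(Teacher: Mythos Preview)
Your proposal is correct and follows essentially the same approach as the paper: both reduce each statement to the decomposition $C=\bigoplus_{\epsilon\in E_C}FH\epsilon$ from Lemma~\ref{C e_C} and then use the orthogonality relations~\eqref{idempotent} on the primitive idempotents. The only cosmetic differences are that the paper handles part~(1) by intersecting the direct sums directly (rather than your two-inclusion argument) and writes out part~(2) by partitioning $E_C\cup E_{C'}$ into three disjoint pieces $E_0,E_1,E_2$ instead of invoking inclusion--exclusion by name, but the content is the same.
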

\begin{proof}
(1). $C\cap C'=\Big(\bigoplus_{\epsilon\in E_{C}}FH\epsilon\Big)
  \cap\Big(\bigoplus_{\epsilon\in E_{C'}}FH\epsilon\Big)
 =\bigoplus_{\epsilon\in E_{C}\cap E_{C'}}FH\epsilon$.
So $E_{C\cap C'}=E_C\cap E_{C'}$. 
And, $e_Ce_{C'}=\Big(\sum_{\epsilon\in E_{C}}\epsilon\Big)
  \Big(\sum_{\epsilon\in E_{C'}}\epsilon\Big)
 =\sum_{\epsilon\in E_{C}\cap E_{C'}}\epsilon
 =\sum_{\epsilon\in E_{C\cap C'}}\epsilon=e_{C\cap C'}$.

(2).  Denote $E_0=E_{C\cap C'}$, $E_1=E_C-E_0$, $E_2=E_{C'}-E_0$;
and $f_i=\sum_{\epsilon\in E_i}\epsilon$ for $i=0,1,2$.
Then 
\begin{align*}
C+C'&\textstyle
 =\Big(\bigoplus_{\epsilon\in E_1}FH\epsilon\Big)
  +\Big(\bigoplus_{\epsilon\in E_0}FH\epsilon\Big) 
 +\Big(\bigoplus_{\epsilon\in E_2}FH\epsilon\Big)
 +\Big(\bigoplus_{\epsilon\in E_0}FH\epsilon\Big)\\
&\textstyle
  =\bigoplus_{\epsilon\in E_0\cup E_1\cup E_2}FH\epsilon
  =\bigoplus_{\epsilon\in E_C\cup E_{C'}}FH\epsilon.
\end{align*}
That is, $E_{C+C'}=E_C\cup E_{C'}$. And,
\begin{align*}
&e_C+e_{C'}-e_Ce_{C'}=(f_1+f_0)+(f_2+f_0)-(f_1+f_0)(f_2+f_0)\\
&\textstyle
=f_1+f_2+2f_0 -f_0=f_1+f_2+f_0
=\sum_{\epsilon\in E_C\cup E_{C'}}\epsilon.
=e_{C+C'}.
\end{align*}

(3). $\overline C=\overline{\bigoplus_{\epsilon\in E_C}FH\epsilon}
 =\bigoplus_{\epsilon\in E_C}FH\overline\epsilon
 =\bigoplus_{\epsilon\in\overline E_C}FH\epsilon$.
\end{proof}

\begin{remark}\label{units of ring}\rm
(1) 
Any ring $R$ in this paper has identity. 
By $R^\times$ we denote the multiplicative group
consisting of all units (invertible elements) of~$R$.
It is known that 
the $R$-endomorphism ring ${\rm End}_{R}(R)$ 
of the regular $R$-module is just $R$ itself; because:
any $R$-homomorphism $\vph:R\to R$ corresponds to the element
$g:=\vph(1_R)$ such that $\vph(a)=\vph(a1_R)=a\vph(1_R)=ag$, $\forall$ $a\in R$.
And, $\vph$ is an $R$-isomorphism if and only if $g\in R^\times$.
An $R$-module $M$ is said to be {\em principal}
if $M$ generated by one element, i.e., $M=Rm$ for an $m\in M$. 

(2) Turn to the group algebra $FH$. 
For any $e_i\in E$, $FH e_i$ is a commutative simple ring, hence 
is a field; then $(FH e_i)^\times=FHe_i-\{0\}$.
Further, for $C=FHe_C=\bigoplus_{\epsilon\in E_C}FH\epsilon\le FH$, 
we have $C^\times=\prod_{\epsilon\in E_C}(FH\epsilon)^\times$.

(3) 
Note that the irreducible submodules $FHe_i$ for $i=0,1,\cdots,s$ 
are non-isomorphic each other; because: if $0\le i\ne j\le s$, 
then $e_iFHe_j=0$ but $e_iFHe_i=FHe_i\ne 0$, so $FHe_i\not\cong FHe_j$. 
Let $C,C'\le FH$ and $\vph:C\to C'$
be an $FH$-isomorphism. Then $E_{C'}=E_{C}$ hence $C'=C$
($C=FH e_C$ as in Lemma~\ref{C e_C}); 
and, setting $g=\vph(e_C)\in C^\times$, 
we have that 
\begin{equation}\label{auto of C}
\vph(c)=cg, \qquad \forall~c\in C. 
\end{equation}
Obviously, for any $g'\in FH$ such that $g'e_C =g$, 
we can get $\vph(c)=cg'$ $\forall$ $c\in C$. But the $g=\vph(e_C)$  
is the unique one who belongs to $C$ and makes Eq.\eqref{auto of C} held.
On the other hand, for any $g\in C^\times$, mapping $c\in C$
to $cg\in C$ is obviously an $FH$-automorphism of $C$;
consequently, $C=FH g$ (thus, any $FH$-code is principal).

(4) Similarly to $E_C$ in Lemma \ref{C e_C}, for $g\in FH$ we
set $e_g=\sum_{\epsilon\in E_g}\epsilon$
where $E_g=\{\epsilon\in E\,|\,g\epsilon\ne 0\}$.
Then $g=\sum_{\epsilon\in E_g}g\epsilon$  
and $FHg=\bigoplus_{\epsilon\in E_g}FH\epsilon=FHe_g$.
\end{remark}

\begin{lemma}\label{g g'} Let $g,g'\in FH$. Then 

{\rm(1)} $FH g\cap FH g'=FH (gg')$. In particular, 
$FH g\cap FH g'=0$ $\iff$ $gg'=0$.

{\rm(2)} 
If $gg'=0$, then $FHg+FH g'=FH g\oplus FH g'=FH\!\cdot\!(g+g')$.
\end{lemma}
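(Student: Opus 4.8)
The plan is to reduce both parts to the idempotent bookkeeping already set up in Lemma~\ref{C C'} and Remark~\ref{units of ring}(4). Recall that for $g\in FH$ we have $FHg=FHe_g$ with $e_g=\sum_{\epsilon\in E_g}\epsilon$, where $E_g=\{\epsilon\in E\,|\,g\epsilon\ne0\}$, and that every $FH\epsilon$ is a field. The one observation I would record first, and which drives everything, is that the support of a product is the intersection of the supports: since $FH$ is commutative, $gg'\epsilon=(g\epsilon)(g'\epsilon)$ for each $\epsilon\in E$, and in the field $FH\epsilon$ this is nonzero exactly when $g\epsilon\ne0$ and $g'\epsilon\ne0$; hence $E_{gg'}=E_g\cap E_{g'}$, and therefore $e_{gg'}=\sum_{\epsilon\in E_g\cap E_{g'}}\epsilon=e_ge_{g'}$.

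For Part~(1), I would combine this with Lemma~\ref{C C'}(1): using $FHg=FHe_g$ and $FHg'=FHe_{g'}$,
$$
FHg\cap FHg'=FHe_g\cap FHe_{g'}=FH(e_ge_{g'})=FHe_{gg'}=FH(gg').
$$
The ``in particular'' clause is then immediate, since $gg'\in FH(gg')$: $FHg\cap FHg'=0$ iff $FH(gg')=0$ iff $gg'=0$.

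For Part~(2), assume $gg'=0$. By the observation $E_g\cap E_{g'}=\emptyset$, so by Part~(1) $FHg\cap FHg'=FH(gg')=0$ and the sum is direct, $FHg+FHg'=FHg\oplus FHg'$. It remains to identify this with $FH(g+g')$. The inclusion $FH(g+g')\subseteq FHg+FHg'$ is clear because $g+g'\in FHg+FHg'$. For the reverse inclusion I would show $g,g'\in FH(g+g')$: from $E_g\cap E_{g'}=\emptyset$ we get $e_gg'=\sum_{\epsilon\in E_g}g'\epsilon=0$ and likewise $e_{g'}g=0$, so that $e_g(g+g')=e_gg=g$ and $e_{g'}(g+g')=g'$; hence $FHg+FHg'\subseteq FH(g+g')$, and the two coincide.

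I do not anticipate a genuine obstacle here: once $FH$ is known to be semisimple with each Wedderburn component $FH\epsilon$ a field, the whole statement is routine. The only point needing a little care is invoking that $FH\epsilon$ has no zero divisors in order to conclude $E_{gg'}=E_g\cap E_{g'}$; everything else is formal manipulation of the idempotents $e_g$, $e_{g'}$, $e_{gg'}$, $e_{g+g'}$.
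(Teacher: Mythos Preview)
Your proof is correct and follows essentially the same approach as the paper: both hinge on the observation that $E_{gg'}=E_g\cap E_{g'}$ because each $FH\epsilon$ is a field, and then reduce to idempotent bookkeeping. The only cosmetic difference is that in Part~(2) you verify $FH(g+g')=FHg\oplus FHg'$ via the explicit identities $e_g(g+g')=g$ and $e_{g'}(g+g')=g'$, whereas the paper argues via $E_{g+g'}=E_g\cup E_{g'}$; these are equivalent formulations of the same computation.
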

\begin{proof}
(1). $FH g\cap FH g'=\big(\bigoplus_{\epsilon\in E_g}FH\epsilon\big)\cap
\big(\bigoplus_{\epsilon\in E_{g'}}FH\epsilon\big)
=\bigoplus_{\epsilon\in E_g\cap E_{g'}}FH\epsilon$.
And $E_g\cap E_{g'}=\{\epsilon\in E\,|\,g\epsilon\ne 0,g'\epsilon\ne 0\}$.
Because $FH\epsilon$ for $\epsilon\in E$ is a field, 
both $g\epsilon\ne 0$ and $g'\epsilon\ne 0$ if and only if $gg'\epsilon=(g\epsilon)(g'\epsilon)\ne 0$. So $E_g\cap E_{g'}=E_{gg'}$. Thus
$FH g\cap FH g'=\bigoplus_{\epsilon\in E_{gg'}}FH\epsilon=FH(gg')$.

(2). Since $gg'=0$, by (1), $FH g\cap FH g'=0$ and 
$E_g\cap E_{g'}=E_{gg'}=\emptyset$.
So, $FHg+FH g'=FH g\oplus FH g'$. 
And $E_{g+g'}=E_g\cup E_{g'}$. Thus $FHg+FH g'=FH(g+g')$.
\end{proof}

The set $E$ in Eq.\eqref{idempotent} of all primitive idempotents of $FH$
is invariant under the ``bar'' map in Eq.\eqref{bar map}, so
it is a disjoint union of $E'$ and $E''$ as follows:  
\begin{align}\label{E' E''}
E=E'\cup E'',\qquad 
E'=\{\epsilon\in E\,|\,\overline\epsilon=\epsilon\},\quad
E''=\{\epsilon\in E\,|\,\overline\epsilon\ne\epsilon\}.
\end{align}
The cardinality $|E''|\ge 0$, but $|E'|\ge 1$ because $e_0\in E'$. 

$\bullet$~ $|E'|=1$ if and only if $\overline e_i\ne e_i$
for $i=1,\cdots,s$.

$\bullet$~ $|E''|=0$ if and only if $\overline e_i=e_i$
for $i=0,1,\cdots,s$.

\begin{remark}\label{n and E'}\rm
Let ${\Bbb Z}_n$ be the integer residue ring modulo $n$, 
and ${\Bbb Z}_n^\times$ be the multiplicative unit group of ${\Bbb Z}_n$. 
Then $q\in{\Bbb Z}_n^\times$ (since $\gcd(n,q)=1$).
In the multiplicative group~${\Bbb Z}_n^\times$,
${\rm ord}_{{\Bbb Z}_n^\times}(q)$ denotes the order of $q$,
and $\big\langle q\big\rangle_{{\Bbb Z}_n^\times}$
denotes the cyclic subgroup generated by $q$.
The following two facts are well-known. 
 
{\bf(1)}\,{\rm(\cite[Theorem 6]{AKS})}\, 
{\it $E'=\{e_0\}$ if and only if
${\rm ord}_{{\Bbb Z}_n^\times}(q)$ is odd.}

{\bf(2)}\,{\rm(\cite[Theorem 1]{KR})}\,
{\it $E'=E$ (i.e., $E''=\emptyset$) if and only if
$-1\in\big\langle q\big\rangle_{{\Bbb Z}_n^\times}$.}

\noindent
Note that (please see \cite[Corollary II.8]{FL20}), 
there are infinitely many integers~$n$ satisfying the above~(1), 
and also infinitely many integers $n$ satisfying the above~(2).
Further, there are also infinitely many integers $n$ 
who meet none of the the above two cases. 
\end{remark}

\begin{lemma}\label{l C^bot}
For any $e_i\in E$, the orthogonal ideal
$(FH e_i)^\bot=\bigoplus_{\epsilon\in E-\{\overline e_i\}}FH\epsilon$.
In particular,

{\rm(1)} $FH e_i$ is LCD if $e_i\in E'$; 

{\rm(2)} $FH e_i$ is self-orthogonal if $e_i\in E''$.
\end{lemma}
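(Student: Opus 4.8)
The plan is to compute $(FH e_i)^\bot$ directly from the description of ideals of $FH$ by subsets of the primitive idempotents. By Lemma~\ref{C e_C}, every ideal is of the form $FH e_C=\bigoplus_{\epsilon\in E_C}FH\epsilon$, so it suffices to identify $E_{(FH e_i)^\bot}\subseteq E$. First I would recall from Lemma~\ref{sigma bar}(5) that for ideals $C,D\le FH$ one has $\langle C,D\rangle=0$ iff $C\overline D=0$, and that here the ``bar'' map is an algebra automorphism permuting $E$ (noted just before Lemma~\ref{C C'}). Hence $\langle FH e_i,\,FH\epsilon\rangle=0$ iff $FH e_i\cdot\overline{FH\epsilon}=FH e_i\cdot FH\overline\epsilon=0$, and by the orthogonality relations \eqref{idempotent} the product of the fields $FH e_i$ and $FH\overline\epsilon$ is zero precisely when $\overline\epsilon\ne e_i$, i.e. $\epsilon\ne\overline e_i$ (using $\overline{\overline e_i}=e_i$).

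From this I would conclude that $FH\epsilon\subseteq(FH e_i)^\bot$ for every $\epsilon\in E\setminus\{\overline e_i\}$, so $\bigoplus_{\epsilon\in E\setminus\{\overline e_i\}}FH\epsilon\subseteq(FH e_i)^\bot$. For the reverse inclusion, a dimension count finishes it: $\dim_F FH e_i=\dim_F FH\overline e_i$ because the bar map is an $F$-linear automorphism, so the right-hand side has codimension $\dim_F FH\overline e_i=\dim_F FH e_i=\dim_F(FH)-\dim_F(FH e_i)^\bot$; equality of dimensions plus containment gives $(FH e_i)^\bot=\bigoplus_{\epsilon\in E\setminus\{\overline e_i\}}FH\epsilon$. (Alternatively one checks directly that any $c\in(FH e_i)^\bot$ has zero $\overline e_i$-component, since $\langle c,\,FH\overline e_i\rangle=0$ forces $c\overline e_i=0$ by the same Lemma~\ref{sigma bar}(5) argument applied to the ideal $FH\overline e_i$.)

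The two special cases are then immediate. If $e_i\in E'$, then $\overline e_i=e_i$, so $(FH e_i)^\bot=\bigoplus_{\epsilon\ne e_i}FH\epsilon$, which intersects $FH e_i$ trivially by \eqref{idempotent}; hence $FH e_i\cap(FH e_i)^\bot=0$, i.e. $FH e_i$ is LCD. If $e_i\in E''$, then $\overline e_i\ne e_i$, so $e_i\in E\setminus\{\overline e_i\}$ and therefore $FH e_i\subseteq(FH e_i)^\bot$, i.e. $FH e_i$ is self-orthogonal.

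I do not expect any real obstacle here; the only point requiring a little care is making sure the bar map is being used as an algebra \emph{automorphism} of $FH$ (valid since $H$ is abelian) rather than merely an anti-automorphism, so that $\overline{FH\epsilon}=FH\overline\epsilon$ is again an ideal generated by a primitive idempotent, and that $\overline\epsilon$ really lies in $E$. Everything else is bookkeeping with the orthogonal idempotent decomposition \eqref{FH=}--\eqref{idempotent}.
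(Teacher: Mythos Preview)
Your proposal is correct and follows essentially the same route as the paper: both reduce the computation to Lemma~\ref{sigma bar}(5), showing $\langle FH e_i,\,FH e_j\rangle=0$ iff $e_i\overline e_j=0$ iff $e_j\ne\overline e_i$. The only minor difference is that the paper's proof is terser---it stops after that equivalence, implicitly using Lemma~\ref{sigma bar}(4) (that $(FH e_i)^\bot$ is itself an ideal, hence a direct sum of some $FH\epsilon$'s) in place of your dimension count for the reverse inclusion; your added detail is fine and the special cases (1), (2) are handled identically.
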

\begin{proof}
By Lemma \ref{sigma bar}(5),
$\big\langle FH e_i,\,FH e_j\big\rangle=0$ if and only if 
$(FH e_i)(\overline{FH e_j})=FH e_i\overline e_j=0$, if and only if
$e_i\overline e_j=0$, if and only if $e_j\ne \overline e_i$. 
\end{proof}

\begin{corollary}\label{c C^bot}
For $C\le FH$, the orthogonal ideal
$C^\bot=\bigoplus_{\epsilon\in E-\overline E_{C}}FH\epsilon$.
\end{corollary}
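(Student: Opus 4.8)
\textbf{Proof proposal for Corollary~\ref{c C^bot}.}
The plan is to reduce the statement to Lemma~\ref{l C^bot} by decomposing $C$ into its primitive components and using the elementary fact that orthogonality turns a sum of subspaces into the intersection of their orthogonal complements. By Lemma~\ref{C e_C} we may write $C=\bigoplus_{\epsilon\in E_C}FH\epsilon=\sum_{\epsilon\in E_C}FH\epsilon$, and since $\langle\,\cdot\,,\,\cdot\,\rangle$ is a nondegenerate symmetric bilinear form on $F^n=FH$, we get $C^\bot=\big(\sum_{\epsilon\in E_C}FH\epsilon\big)^\bot=\bigcap_{\epsilon\in E_C}(FH\epsilon)^\bot$.

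Next I would invoke Lemma~\ref{l C^bot} to rewrite each factor as $(FH\epsilon)^\bot=\bigoplus_{\eta\in E-\{\overline\epsilon\}}FH\eta$, so that $(FH\epsilon)^\bot$ is an $FH$-code whose set of primitive components (in the sense of Lemma~\ref{C e_C}) is $E-\{\overline\epsilon\}$. Applying the intersection formula of Lemma~\ref{C C'}(1) repeatedly over $\epsilon\in E_C$, the $FH$-code $\bigcap_{\epsilon\in E_C}(FH\epsilon)^\bot$ has set of primitive components $\bigcap_{\epsilon\in E_C}\big(E-\{\overline\epsilon\}\big)=E-\{\overline\epsilon\mid\epsilon\in E_C\}=E-\overline E_C$. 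Hence $C^\bot=\bigoplus_{\epsilon\in E-\overline E_C}FH\epsilon$, which is the claim.

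A slightly more hands-on route that bypasses the ``sum-to-intersection'' step: by Lemma~\ref{sigma bar}(4), $C^\bot$ is an $FH$-code, so $C^\bot=\bigoplus_{\epsilon\in E_{C^\bot}}FH\epsilon$ and it suffices to pin down $E_{C^\bot}$. For $\epsilon\in E$ one has $FH\epsilon\subseteq C^\bot\iff\langle C,FH\epsilon\rangle=0$, and by Lemma~\ref{sigma bar}(5) this is equivalent to $C\cdot\overline{FH\epsilon}=0$; using $\overline{FH\epsilon}=FH\overline\epsilon$ and $C=FHe_C$ this becomes $FH(e_C\overline\epsilon)=0$, i.e.\ $e_C\overline\epsilon=0$, i.e.\ $\overline\epsilon\notin E_C$, i.e.\ $\epsilon\notin\overline E_C$ (recall $E$ is stable under the bar map). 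Thus $FH\epsilon\subseteq C^\bot$ exactly when $\epsilon\in E-\overline E_C$, and since $C^\bot$ is a direct sum of primitive components this gives $C^\bot=\bigoplus_{\epsilon\in E-\overline E_C}FH\epsilon$.

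I do not expect a real obstacle here; the only points requiring care are purely organizational: that $E$ is invariant under $\epsilon\mapsto\overline\epsilon$ (already recorded before Lemma~\ref{C C'}), so that $\overline E_C\subseteq E$ and the complement $E-\overline E_C$ is meaningful; and that an $FH$-code is \emph{determined} by its set of primitive components (Lemma~\ref{C e_C}), so that the two sides genuinely coincide rather than merely having equal dimension. If desired, a dimension check ($\dim C+\dim C^\bot=n$, $|E_C|+|E-\overline E_C|=|E|$, and $\dim FH\epsilon=\dim FH\overline\epsilon$) can be appended as a consistency remark, but it is not part of the argument.
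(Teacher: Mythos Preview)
Your proposal is correct. The paper states this corollary without proof, treating it as an immediate consequence of Lemma~\ref{l C^bot}; your first route (decompose $C$ into primitive components, apply Lemma~\ref{l C^bot} to each, and intersect via Lemma~\ref{C C'}(1)) is exactly the intended one-line expansion, and your second route via Lemma~\ref{sigma bar}(5) is an equally valid alternative.
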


\section{$2$-quasi-cyclic codes and dihedral codes}\label{2-quasi-cyclic}

In this section we characterize $2$-quasi-cyclic codes by 
the so-called Goursat Lemma (Theorem \ref{Goursat C}), 
and then (in Theorem \ref{2-quasi dihedral}) 
answer the question: what $2$-quasi-cyclic codes are dihedral codes?

\subsection{$2$-quasi-cyclic codes}
Let $H=\{1,x,\cdots,x^{n-1}\}$ be the cyclic group of order $n$, 
and $\gcd(n,q)=1$ as before.
Any $FH$-submodule $C$ of $(FH)^2=FH\times FH$, 
denoted by $C\le(FH)^2$, is called a 
a {\em quasi-cyclic code} over $F$ of {\em index} $2$ and {\em coindex} $n$, 
or  a {\em $2$-quasi-cyclic code} over $F$ of length $2n$. 
For $c\in C$ we write $c=(c,c')$ with $c,c'\in FH$. 
We write $c=c(x)=c_0+c_1x+\cdots+c_{n-1}x^{n-1}$ 
if we consider the polynomial version $FH\cong F[X]/\langle X^n-1\rangle$; 
and we write $c=(c_0, c_1,\cdots, c_{n-1})$ if we consider the word version.

\begin{remark}%[Goursat Lemma]\label{Goursat}
\rm
Goursat Lemma is originally a group-theoretic result, e.g., see \cite[p25]{AB}.
It is extended in similar way to rings, modules etc.,
 e.g., see \cite{Lambek}. We state the module version of Goursat Lemma
as follows (cf. \cite{AB, Lambek}, or one can check it step by step).

\noindent{\bf Goursat Lemma}. 
{\it Let $R$ be a ring and $M_1\times M_2$ be the direct sum of 
 $R$-modules $M_1,M_2$. 
By $\rho_i: M_1\times M_2 \to M_i$, $i=1,2$,
we denote the projection from $M_1\times M_2$ to $M_i$. 

{\rm (1)} Let $C$ be a submodule of $M_1\times M_2$.
Denote 
\begin{equation}\label{C_i}
\begin{array}{l}
\widetilde C_1=\rho_1(C)
 =\{a_1\in M_1\,|\, (a_1,a_2)\in C~\mbox{for some~} a_2\in M_2\},
\\[2pt]
 C_1=\rho_1\big(C\cap(M_1\times 0)\big)
  =\{a_1\in M_1\,|\, (a_1,0)\in C\};
\\[2pt]
 \widetilde C_2=\rho_2(C)
 =\{a_2\in M_2\,|\, (a_1,a_2)\in C~\mbox{for some~} a_1\in M_1\},
\\[2pt]
 C_2=\rho_2\big(C\cap(0\times M_2)\big)=\{a_2\in M_2\,|\, (0,a_2)\in C\}.
\end{array}
\end{equation}
Then $C_i\le\widetilde C_i\le M_i$, $i=1,2$.
And, for any $c_1 + C_1\in \widetilde C_1/C_1$ 
there is a unique $c_2+C_2\in\widetilde C_2/C_2$ such that 
$(c_1,c_2)\in C$, and the map 
\begin{equation}\label{varphi}
 \vph: \widetilde C_1/C_1\to \widetilde C_2/C_2, ~
 c_1+C_1\mapsto c_2+C_2, \quad (\mbox{where $(c_1,c_2)\in C$,})
\end{equation}
is an $R$-isomorphism, and
\begin{equation}\label{structure}
 C=\big\{\,(c_1,c_2)\;\big|\; 
  c_i\in\widetilde C_i, i=1,2; \,\vph(c_1+C_1)=c_2+C_2\,\big\}.
\end{equation}

{\rm (2)} If $C_i\le\widetilde C_i\le M_i$ for $i=1,2$, 
and $\vph:  \widetilde C_1/C_1\to \widetilde C_2/C_2$ is an $R$-isomorphism, then 
the $C$ constructed in Eq.\eqref{structure} is an $R$-submodule of $M_1\times M_2$.
}
\end{remark}

Return to $2$-quasi-cyclic codes. 
Since $FH$ is semisimple, by Lemma \ref{C e_C}
we can get a refined version of Goursat Lemma for $C\le FH\times FH$.

\begin{theorem}\label{Goursat C}
If $C\le (FH)^2$,
then there are $C_1,C_2,C_{12}\le FH$ 
satisfying that $C_1\cap C_{12}=0=C_2\cap C_{12}$, 
and an element $g\in C_{12}^\times$ such that 
\begin{equation}\label{structure C}
 C = (C_1\times C_2)\oplus\widehat C_{12},~~ \mbox{where}~~
\widehat C_{12}=
 \big\{\big(c_{12},\,c_{12}g\big)\,\big|\,c_{12}\in C_{12}\big\}\cong C_{12}.
\end{equation}

Conversely, if there are $C_1, C_2, C_{12}\le FH$ 
with $C_1\cap C_{12}=0=C_2\cap C_{12}$ and a
$g\in C_{12}^\times$, then the $C$ constructed in Eq.\eqref{structure C}
is an $FH$-submodule of $(FH)^2$. 
\end{theorem}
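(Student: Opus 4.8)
The plan is to derive Theorem~\ref{Goursat C} from the general module version of Goursat Lemma applied to $M_1=M_2=FH$ over $R=FH$, and then to use the semisimplicity of $FH$ (via Lemma~\ref{C e_C} and Remark~\ref{units of ring}) to replace the abstract quotient-and-isomorphism data by honest submodules and an honest unit $g$. First I would take $C\le(FH)^2$ and introduce $\widetilde C_1,\widetilde C_2,C_1,C_2$ exactly as in Eq.\eqref{C_i}, so that Goursat Lemma gives an $FH$-isomorphism $\vph:\widetilde C_1/C_1\to\widetilde C_2/C_2$ with $C$ described by Eq.\eqref{structure}. Since $FH$ is semisimple, $C_1$ is a direct summand of $\widetilde C_1$: by Lemma~\ref{C e_C}, $\widetilde C_1=FHe_{\widetilde C_1}$ and $C_1=FHe_{C_1}$ with $E_{C_1}\subseteq E_{\widetilde C_1}$, so putting $C_{12}:=\bigoplus_{\epsilon\in E_{\widetilde C_1}\setminus E_{C_1}}FH\epsilon$ we get $\widetilde C_1=C_1\oplus C_{12}$ and $C_1\cap C_{12}=0$, and the natural projection identifies $C_{12}\xrightarrow{\ \sim\ }\widetilde C_1/C_1$.

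Next I would transport $\vph$ through this identification. Composing $C_{12}\hookrightarrow\widetilde C_1\twoheadrightarrow\widetilde C_1/C_1\xrightarrow{\vph}\widetilde C_2/C_2$ gives an $FH$-isomorphism $\psi:C_{12}\to\widetilde C_2/C_2$. Again by semisimplicity, writing $\widetilde C_2=FHe_{\widetilde C_2}$, $C_2=FHe_{C_2}$, there is a complement $C_{12}':=\bigoplus_{\epsilon\in E_{\widetilde C_2}\setminus E_{C_2}}FH\epsilon$ with $\widetilde C_2=C_2\oplus C_{12}'$ and the projection $\widetilde C_2/C_2\xrightarrow{\sim}C_{12}'$. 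Hence we obtain an $FH$-isomorphism $C_{12}\xrightarrow{\sim}C_{12}'$; by Remark~\ref{units of ring}(3), isomorphic $FH$-submodules of $FH$ have the same set of associated primitive idempotents, so $C_{12}'=C_{12}$, and the isomorphism $C_{12}\to C_{12}$ is, by Eq.\eqref{auto of C}, right multiplication by $g:=\psi^{-1}$-image-of-$e_{C_{12}}$... more precisely, by the unique $g\in C_{12}^\times$ with $c_{12}\mapsto c_{12}g$. Chasing definitions, $(c_{12},c_{12}g)\in C$ for all $c_{12}\in C_{12}$; together with $C_1\times 0\subseteq C$ and $0\times C_2\subseteq C$ and a dimension count ($\dim C=\dim C_1+\dim C_2+\dim C_{12}$, which matches $\dim\widetilde C_1+\dim C_2=\dim C_1+\dim C_{12}+\dim C_2$ from Goursat), this yields $C=(C_1\times C_2)\oplus\widehat C_{12}$ as in Eq.\eqref{structure C}. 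One also checks $C_1\cap C_{12}=0$ already holds and $C_2\cap C_{12}=0$ holds because $E_{C_2}\cap E_{C_{12}}=E_{C_2}\cap(E_{\widetilde C_2}\setminus E_{C_2})=\emptyset$, invoking Lemma~\ref{g g'}(1) or Lemma~\ref{C C'}(1).

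For the converse, given $C_1,C_2,C_{12}\le FH$ with $C_1\cap C_{12}=0=C_2\cap C_{12}$ and $g\in C_{12}^\times$, I would verify directly that the set $C$ in Eq.\eqref{structure C} is closed under addition and under the $FH$-action: for the action, $r\cdot(c_{12},c_{12}g)=(rc_{12},(rc_{12})g)$ since $g$ centralizes nothing is needed—$FH$ is commutative, so $rc_{12}\cdot g=r\cdot c_{12}g$ works and $rc_{12}\in C_{12}$. Then $C$ is an $FH$-submodule; the sum $(C_1\times C_2)+\widehat C_{12}$ is direct because if $(c_1,c_2)=(c_{12},c_{12}g)$ with $c_1\in C_1$, $c_2\in C_2$, $c_{12}\in C_{12}$, then $c_{12}=c_1\in C_1\cap C_{12}=0$, forcing the tuple to be zero. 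This is the easy direction.

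The main obstacle is the bookkeeping in the first direction: cleanly producing the complement submodules $C_{12}$ and verifying that the transported isomorphism really is multiplication by a single unit $g\in C_{12}^\times$ rather than merely an abstract isomorphism between a priori different submodules. The key conceptual input making this go through is Remark~\ref{units of ring}(3): non-isomorphic irreducible components of $FH$ force an $FH$-isomorphism between submodules of $FH$ to fix the supporting idempotent set, so domain and codomain of $\psi$ (after choosing complements) literally coincide, and then Eq.\eqref{auto of C} supplies the unit. Everything else—the existence of complements, the direct-sum decompositions, the disjointness of idempotent supports—is routine semisimple-module theory already packaged in Lemmas~\ref{C e_C}--\ref{g g'}.
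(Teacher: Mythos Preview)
Your proposal is correct and follows essentially the same route as the paper's proof: apply Goursat Lemma to get $C_1\le\widetilde C_1$, $C_2\le\widetilde C_2$ and the isomorphism $\vph$, use semisimplicity (Lemma~\ref{C e_C}) to split off complements $C_{12}$ and $C_{12}'$, and then invoke Remark~\ref{units of ring}(3) to conclude $C_{12}'=C_{12}$ and that the induced automorphism of $C_{12}$ is multiplication by some $g\in C_{12}^\times$. The only cosmetic difference is that the paper reads the decomposition $C=(C_1\times C_2)\oplus\widehat C_{12}$ directly off the Goursat description Eq.\eqref{structure} (see Eq.\eqref{more structure C}) rather than via an inclusion-plus-dimension-count, and it dispatches the converse by citing Goursat Lemma rather than by your direct verification; your slight slip writing ``$g:=\psi^{-1}$-image-of-$e_{C_{12}}$'' (it should be the image under the forward isomorphism, as you immediately correct) does not affect the argument.
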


\begin{proof}
Assume that $C\le (FH)^2$. 
Let $C_1,C_2,\widetilde C_1,\widetilde C_2$ be as in Eq.\eqref{C_i}.
Since $C_1\le\widetilde C_1\le FH$, by Lemma \ref{C e_C} and Lemma \ref{C C'},
we can assume that 
$E_{C_1}=\{e_{i_1},\cdots,e_{i_k}\}\subseteq E$,
i.e., $C_{1}=\bigoplus_{\alpha=1}^{k}FHe_{i_\alpha}$; and that
$$E_{\widetilde C_1}=\{e_{i_1},\cdots,e_{i_k},\,e_{j_1},\cdots,e_{j_{h}}\}\subseteq E,
\mbox{~~~with}~~ \{j_1,\cdots,j_{h}\}\subseteq E-E_{C_1};
$$
Then $\widetilde C_1=C_1\oplus C_{12}$, 
where $C_{12}=\bigoplus_{\alpha=1}^{h}FHe_{j_\alpha}$, i.e., 
$E_{C_{12}}=\{j_1,\cdots,j_{h}\}$.
Similarly, we have a $C_{12}'\le FH$ such that $\widetilde C_2=C_2\oplus C_{12}'$.
Then $\widetilde C_1/C_1\cong C_{12}$ and $\widetilde C_2/C_2\cong C'_{12}$; so
the isomorphism $\vph$ in Eq.\eqref{varphi} induces an
isomorphism 
\begin{equation}\label{vph'}
\vph': C_{12}\to C_{12}'~~\mbox{ such that }~~ 
\vph\big(c+C_1\big)=\vph'(c)+C_2,~\forall~c\in C_{12}.
\end{equation}
So $C_{12}'=C_{12}$ and there is an element $g\in C_{12}^\times$
such that $\vph'(c)=cg$ for all $c\in C_{12}$; 
see Remark \ref{units of ring}(3).
In conclusion,
we have a $C_{12}\le FH$ such that
\begin{equation*}
 \widetilde C_1 =C_1\oplus C_{12}\le FH,\quad 
 \widetilde C_2 =C_2\oplus C_{12}\le FH;
\end{equation*} 
and a $g\in C_{12}^\times$ such that
(note that $(c_1\!+\!c_{12},\,c_2\!+\!c_{12}g)=(c_1,c_2)+(c_{12},c_{12}g)$):
\begin{equation}\label{more structure C}
 C\!=\!\big\{(c_1\!+\!c_{12},\;c_2\!+\!c_{12}g)\,\big|\,
 c_1\!\in\! C_1, c_2\!\in\! C_2,  c_{12}\!\in\! C_{12}\big\}
\!=\!(C_1\times C_2)\oplus \widehat C_{12}.
\end{equation}

By Goursat Lemma, the converse part is obviously true.
\end{proof}

\begin{corollary}\label{dim C}
Let $C$ be as in Eq.\eqref{structure C}. 
Then $C=(C_1\times 0)\oplus(0\times C_2)\oplus\widehat C_{12}$, 
the $FH$-composition factors of $C$ consist of the composition factors of 
$C_1$, $C_2$ and $C_{12}$; in particular, 
$\dim_F C=\dim_F C_1+\dim_F C_2+\dim_F C_{12}$.
\end{corollary}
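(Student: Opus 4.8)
The plan is to reduce everything to the internal direct sum decomposition $C=(C_1\times C_2)\oplus\widehat C_{12}$ already established in Theorem~\ref{Goursat C}, and then invoke the Jordan--Hölder theorem. First I would observe that, as $FH$-submodules of $(FH)^2$, one has the obvious internal direct sum $C_1\times C_2=(C_1\times 0)\oplus(0\times C_2)$; substituting this into Eq.\eqref{structure C} yields $C=(C_1\times 0)\oplus(0\times C_2)\oplus\widehat C_{12}$, which is the first assertion of the corollary.

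Next I would identify the three summands up to $FH$-isomorphism. The first projection $\rho_1$ restricts to an $FH$-isomorphism $C_1\times 0\cong C_1$, and likewise $\rho_2$ gives $0\times C_2\cong C_2$; the map $c_{12}\mapsto(c_{12},c_{12}g)$ is an $FH$-isomorphism $C_{12}\cong\widehat C_{12}$, being $FH$-linear, surjective by the definition of $\widehat C_{12}$, and injective because composing it with $\rho_1$ recovers the identity on $C_{12}$ (this last isomorphism is in fact already recorded in Eq.\eqref{structure C}).

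Then, since $FH$ is finite-dimensional, each of $C_1,C_2,C_{12}$ has a composition series, and by the Jordan--Hölder theorem the multiset of $FH$-composition factors of a finite direct sum is the disjoint union of the corresponding multisets of the summands. Applying this to $C=(C_1\times 0)\oplus(0\times C_2)\oplus\widehat C_{12}$, together with the three isomorphisms above, shows that the $FH$-composition factors of $C$ are precisely those of $C_1$, $C_2$ and $C_{12}$ taken together. Finally, since $\dim_F$ is additive over direct sums and invariant under $F$-isomorphism, $\dim_F C=\dim_F C_1+\dim_F C_2+\dim_F C_{12}$ (alternatively, sum the $F$-dimensions over the composition factors).

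I expect no serious obstacle here: the only point needing a moment's care is that the decomposition in Eq.\eqref{structure C} is a genuine internal direct sum, i.e. $(C_1\times C_2)\cap\widehat C_{12}=0$, but this is part of Theorem~\ref{Goursat C} and follows at once from $C_1\cap C_{12}=0$, since $(c_{12},c_{12}g)\in C_1\times C_2$ forces $c_{12}\in C_1\cap C_{12}=0$. Everything else is bookkeeping with the Goursat decomposition and the Jordan--Hölder theorem.
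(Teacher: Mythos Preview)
Your proposal is correct. The paper states this corollary without proof, treating it as immediate from Theorem~\ref{Goursat C}; your argument spells out exactly the natural details (splitting $C_1\times C_2$ further, identifying the three summands with $C_1,C_2,C_{12}$, then reading off composition factors and dimensions), so it is essentially the intended justification.
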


Recall that $C\le(FH)^2$ is said to be principal 
if $C$ is generated by one element, see Remark~\ref{units of ring}(1).

\begin{corollary}\label{generators of C}
{\rm(1)}
Any $2$-quasi-cyclic codes over $F$ is generated by two elements.

{\rm(2)} A $2$-quasi-cyclic code $C\le(FH)^2$ is principal 
if and only if $C$ has no repeated $FH$-composition factors,
i.e., $E_{C_1}\!\cap E_{C_2}=E_{C_1}\cap E_{C_{12}}
 =E_{C_2}\cap E_{C_{12}}=\emptyset$.
\end{corollary}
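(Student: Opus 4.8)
The plan is to read off both statements directly from the structural description in Theorem \ref{Goursat C} (Eq.\eqref{structure C}) together with the dimension count in Corollary \ref{dim C} and the fact (Remark \ref{units of ring}(3)) that every $FH$-code is principal. For part (1), I would observe that in $C=(C_1\times C_2)\oplus\widehat C_{12}$ the subcode $C_1\times C_2$ is a product of two cyclic codes, hence by Remark \ref{units of ring}(3) there are $u_1\in C_1^\times$ with $C_1=FHu_1$ and $u_2\in C_2^\times$ with $C_2=FHu_2$; then $C_1\times C_2=FH(u_1,u_2)$ is principal. Likewise $\widehat C_{12}=FH(g,g^2)$ is generated by the single element $(g,g^2)$ (using $g\in C_{12}^\times$, so $C_{12}=FHg$). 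Thus $C=FH(u_1,u_2)+FH(g,g^2)$ is generated by two elements. (One should double-check the minor point that $(u_1,u_2)$ together with $(g,g^2)$ indeed generate the direct sum; this is immediate since the first generates $C_1\times C_2$ and the second generates $\widehat C_{12}$, and the sum of their submodules is all of $C$.)

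For part (2), the key observation is that $FH$ is semisimple with pairwise non-isomorphic simple components $FHe_i$ (Remark \ref{units of ring}(3)), so an $FH$-module is principal (cyclic) if and only if it is multiplicity-free as a sum of simples, i.e., no simple component appears twice. By Corollary \ref{dim C}, the composition factors of $C$ are exactly the disjoint-with-multiplicity union of those of $C_1$, $C_2$, and $C_{12}$, indexed by $E_{C_1}$, $E_{C_2}$, $E_{C_{12}}$ respectively (each of these $E$-sets is already multiplicity-free within itself, since each names a sum of distinct primitive idempotents). Hence $C$ has no repeated composition factor precisely when these three index sets are pairwise disjoint, which is the stated condition.

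It remains to argue the ``if and only if'' at the level of principality. For the forward direction: if $C=FH\gamma$ for some $\gamma=(\gamma_1,\gamma_2)$, then $C$ is a quotient of the regular module $FH$, so each simple $FHe_i$ occurs in $C$ with multiplicity at most one; by the composition-factor description this forces the three $E$-sets to be pairwise disjoint. For the converse: if $E_{C_1},E_{C_2},E_{C_{12}}$ are pairwise disjoint, then $C=\bigoplus_{\epsilon\in E_{C_1}}FH\epsilon\;\oplus\;\bigoplus_{\epsilon\in E_{C_2}}FH\epsilon\;\oplus\;\bigoplus_{\epsilon\in E_{C_{12}}}FH\epsilon$ as an $FH$-module (embedding each summand into $(FH)^2$ via the appropriate coordinate(s) and, for $C_{12}$, via $c_{12}\mapsto(c_{12},c_{12}g)$), and since the indexing idempotents over all three sets are distinct, this is a multiplicity-free semisimple module; choosing $\gamma$ to be the sum of $(u_1,0)$, $(0,u_2)$ from part (1) and $(g,g^2)$ gives $C=FH\gamma$.

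The main obstacle, such as it is, is purely bookkeeping: making sure the composition factors of $\widehat C_{12}$ really are those of $C_{12}$ (immediate from $\widehat C_{12}\cong C_{12}$ in Eq.\eqref{structure C}), and that summing the three principal generators across coordinates does not accidentally collapse — which cannot happen precisely because the three $E$-sets are disjoint, so the corresponding idempotent components of $\gamma$ live in distinct simple summands and each is a unit there. No genuinely hard step arises; the content is entirely packaged in Theorem \ref{Goursat C} and Corollary \ref{dim C}.
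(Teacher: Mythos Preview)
Your argument for part~(2) is essentially correct and matches the paper's approach: the forward direction uses that a cyclic $FH$-module is a quotient of the regular module $FH$ and hence multiplicity-free, and for the converse you exhibit an explicit generator whose idempotent components can be separated because the three index sets are pairwise disjoint.

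However, your proof of part~(1) contains a genuine error. The claim that $C_1\times C_2 = FH(u_1,u_2)$ is false in general: the cyclic submodule $FH(u_1,u_2)=\{(au_1,au_2)\mid a\in FH\}$ has $F$-dimension $\dim_F(C_1+C_2)$, which equals $\dim_F C_1+\dim_F C_2=\dim_F(C_1\times C_2)$ only when $E_{C_1}\cap E_{C_2}=\emptyset$. Indeed, your own criterion in part~(2) says that $C_1\times C_2$ is principal \emph{if and only if} $E_{C_1}\cap E_{C_2}=\emptyset$, so asserting it is always principal contradicts what you prove a paragraph later. For a concrete failure, take $C_1=C_2=FHe_j$ and $C_{12}=0$; then $C=C_1\times C_2$ contains two copies of $FHe_j$, while your proposed generators $(u_1,u_2)$ and $(g,g^2)=(0,0)$ together span only the diagonal copy.

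The paper repairs this by exploiting the one disjointness that \emph{is} guaranteed by Theorem~\ref{Goursat C}, namely $C_1\cap C_{12}=0$ (equivalently $E_{C_1}\cap E_{C_{12}}=\emptyset$). Writing $C_1=FHg_1$, $C_2=FHg_2$, $C_{12}=FHg_{12}$, one takes the two generators $(g_1+g_{12},\,g_{12}g)$ and $(0,g_2)$. Multiplying the first by $e_{C_1}$ and by $e_{C_{12}}$ recovers $(g_1,0)$ and $(g_{12},g_{12}g)$ separately, so together with $(0,g_2)$ these two elements generate $(C_1\times 0)\oplus(0\times C_2)\oplus\widehat C_{12}=C$. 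The point is that you must fold $C_{12}$ in with $C_1$ (or with $C_2$), not $C_1$ with $C_2$, since only the former pair is known to be disjoint.
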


\begin{proof}
(1). Take the notation in Eq.\eqref{structure C}. 
By Remark~\ref{units of ring}(3), there are 
$g_1\in C_1$, $g_{12}\in C_{12}$ and $g_2\in C_2$ 
such that $C_1=FH g_1$, $C_{12}=FH g_{12}$ and $C_2=FH g_2$. 
Take $(g_1+g_{12},\, g_{12}g),(0,g_2) \in C$, and $e_{C_1}$ as in Lemma \ref{C e_C}. 
Since $C_1\cap C_{12}=0$,   
$$e_{C_1}(g_1+g_{12},\, g_{12}g)
  =\big(e_{C_1}g_1+e_{C_1}g_{12},\; e_{C_1}g_{12}g)=(g_1,0);
$$
and $FH(g_1,0)=C_1\times 0$. Similarly, 
$e_{C_{12}}(g_1+g_{12},\, g_{12}g)=(g_{12}, g_{12}g)$ and
$FH(g_{12},\,g_{12}g)=\widehat C_{12}$. 
By the stricture of $C$ in Eq.\eqref{structure C}
(or by Corollary \ref{dim C}), 
\begin{align}\label{2 generators}
C=FH\!\cdot\!(g_1+g_{12},\; g_{12}g)+FH\!\cdot\!(0, g_2). 
\end{align}

(2).
If $C=FH\!\cdot\!(a,b)$ is generated by $(a,b)\in(FH)^2$, 
then $FH\to C$, $d\mapsto d(a,b)$, is a surjective homomorphism.
The regular module $FH$ has no repeated composition factors
(cf. Remark~\ref{units of ring}(3)), 
hence $C$ has no repeated composition factors.

Next assume that $C$ has no repeated composition factors.
Let $g_1,g_2,g_{12},g$ be as above.
By Corollary \ref{dim C}, 
$FH g_1$, $FH g_2$ and $FH g_{12}$ have no common $FH$-composition factors.
Then $e_{C_1}(g_1+g_{12},\, g_2+g_{12}g)=(g_1,0)$, 
$e_{C_2}(g_1+g_{12},\, g_2+g_{12}g)=(0,g_2)$,
$e_{C_{12}}(g_1+g_{12},\, g_2+g_{12}g)=(g_{12},g_{12}g)$.
Thus, $C=FH f$ for $f=(g_1+g_{12},\, g_2+g_{12}g)$.
\end{proof}

\begin{theorem}\label{generator matrix}
Let $C\le(FH)^2$. 
The following three are equivalent to each other:

{\rm(1)} $C$ is a double circulant code of length $2n$, i.e., 
as a linear code, $C$ has a double circulant matrix $(I~A)$ of size $n\times 2n$
as a generator matrix. 

{\rm(2)} $C=FH\!\cdot\!(1,a)$ for an element $(1,a)\in (FH)^2$.

{\rm(3)}  $\widetilde C_1=C_1+C_{12}=FH$ and $C_2=0$, 
where $C_1, C_2, C_{12}$ are as in Eq.\eqref{structure C}.
\end{theorem}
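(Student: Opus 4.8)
The plan is to prove the three-way equivalence by a cycle $(1)\Rightarrow(2)\Rightarrow(3)\Rightarrow(1)$, using the Goursat description of $C$ in Theorem~\ref{Goursat C} as the main bookkeeping device.

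\textbf{Step $(1)\Rightarrow(2)$.} Suppose $C$ has generator matrix $(I~A)$ with $A$ an $n\times n$ circulant matrix. A circulant matrix is exactly the matrix (in the standard basis $1,x,\dots,x^{n-1}$ of $FH$) of multiplication by some fixed element $a=a(x)\in FH$: the $i$-th row of $A$ represents $x^i a(x)$. Hence the rows of $(I~A)$ are the vectors $(x^i,\,x^i a)=x^i\cdot(1,a)$ for $i=0,\dots,n-1$, and the $F$-span of these rows is precisely the $FH$-submodule generated by $(1,a)$. So $C=FH\cdot(1,a)$.

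\textbf{Step $(2)\Rightarrow(3)$.} Assume $C=FH\cdot(1,a)$. Then the projection $\rho_1(C)=\widetilde C_1$ contains $1$, so $\widetilde C_1=FH$; by Theorem~\ref{Goursat C}, $\widetilde C_1=C_1\oplus C_{12}$, i.e.\ $C_1+C_{12}=FH$. It remains to show $C_2=0$. If $(0,b)\in C$ then $(0,b)=d\cdot(1,a)=(d,da)$ for some $d\in FH$, forcing $d=0$ and hence $b=0$; thus $C\cap(0\times FH)=0$, so $C_2=0$. (Equivalently: $\dim_F C=\dim_F FH\cdot(1,a)\le n$, while Corollary~\ref{dim C} gives $\dim_F C=\dim_F C_1+\dim_F C_2+\dim_F C_{12}=\dim_F\widetilde C_1+\dim_F C_2=n+\dim_F C_2$, so $C_2=0$.)

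\textbf{Step $(3)\Rightarrow(1)$.} Assume $\widetilde C_1=C_1+C_{12}=FH$ (a direct sum, $C_1\cap C_{12}=0$) and $C_2=0$. By Theorem~\ref{Goursat C}, $C=\widehat C_{12}\oplus(C_1\times 0)=\{(c_1+c_{12},\,c_{12}g)\mid c_1\in C_1,\ c_{12}\in C_{12}\}$ for some $g\in C_{12}^\times$. Since $FH=C_1\oplus C_{12}$ there are complementary idempotents $e_{C_1},e_{C_{12}}$ with $1=e_{C_1}+e_{C_{12}}$, and writing $a:=g\in C_{12}$ (extended by $0$ on $C_1$, i.e.\ $a=g e_{C_{12}}$) I claim $C=FH\cdot(1,a)$. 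Indeed $d\cdot(1,a)=(d,\,da)=(d e_{C_1}+d e_{C_{12}},\ d e_{C_{12}}g)$, which is exactly the general element of the Goursat form with $c_1=d e_{C_1}$, $c_{12}=d e_{C_{12}}$; since $d$ ranges over $FH$ the pair $(c_1,c_{12})$ ranges over all of $C_1\times C_{12}$, so $FH\cdot(1,a)=C$. Finally, $FH\cdot(1,a)$ has as an $F$-spanning set the rows $x^i\cdot(1,a)=(x^i,\,x^i a)$, $i=0,\dots,n-1$; these assemble into the matrix $(I~A)$ where $A$ is the circulant matrix of multiplication by $a$. As these $n$ rows are $F$-independent (their left block is $I$), $(I~A)$ is a genuine generator matrix, so $C$ is double circulant.

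\textbf{Main obstacle.} The routine parts are the circulant$\leftrightarrow$multiplication dictionary and the dimension count. The one place needing a little care is Step $(3)\Rightarrow(1)$: one must verify that the element $a$ built from $g\in C_{12}^\times$ really reproduces \emph{all} of $C$ and not just $\widehat C_{12}$ — this is where the hypothesis $C_1+C_{12}=FH$ is essential, via the idempotent decomposition $1=e_{C_1}+e_{C_{12}}$, so that $d\mapsto(d e_{C_1},d e_{C_{12}})$ surjects onto $C_1\times C_{12}$. Once that is in hand, linear independence of the rows of $(I~A)$ is immediate from the identity block, completing the cycle.
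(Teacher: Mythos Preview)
Your proof is correct and follows essentially the same route as the paper, which proves $(1)\Leftrightarrow(2)$ by citing \cite[Lemma~3.7]{FL21}, shows $(2)\Rightarrow(3)$ via Corollary~\ref{generators of C}(2) (principal $\Rightarrow$ no repeated factors $\Rightarrow$ $E_{C_2}\cap E=E_{C_2}\cap(E_{C_1}\cup E_{C_{12}})=\emptyset$), and shows $(3)\Rightarrow(2)$ by specializing the two-generator formula Eq.~\eqref{2 generators} with $g_1+g_{12}=1$ and $g_2=0$. Your version is slightly more self-contained: you handle $(1)\Leftrightarrow(2)$ directly via the row description $(x^i,x^ia)$, and your deduction of $C_2=0$ from $(0,b)=d(1,a)\Rightarrow d=0$ is more elementary than the paper's appeal to composition factors.
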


\begin{proof}
(1) $\Leftrightarrow$ (2). \cite[Lemma 3.7]{FL21}.

(2) $\Rightarrow$ (3). Since $C=\{(a 1, a g)\,|\,a\in FH\}$, 
by the definition of $\widetilde C_1$ in Eq.\eqref{C_i}, 
$\widetilde C_1=FH\!\cdot\! 1=FH$. By Corollary \ref{generators of C}(2),
$C_2$ and $\widetilde C_1=FH$ have no composition factors in common, hence $C_2=0$.

(3) $\Rightarrow$ (2). Since $C_1+C_{12}=\widetilde C_1=FH$, 
in Eq.\eqref{2 generators} we can take $g_1+g_{12}=1$. 
Since $C_2=0$, we get that $g_2=0$, and by Eq.\eqref{2 generators}, 
$C=FH\!\cdot\!(1,\,g_{12}g)$.
\end{proof}

\subsection{Dihedral group codes}

Let $FH$ as before.
Let $T=\{1,y\}$ be a cyclic group of order $2$,
and $G=H\rtimes T$ be the semidirect product with the relation $yxy^{-1}=x^{-1}$,
i.e., $G$ is the the dihedral group of order $2n$.
Then  
$G=\{1,x,\cdots,x^{n-1},~ y, xy,\cdots, x^{n-1}y\}$
 consists of $2n$ elements.
The group algebra $FG$ is an $F$-space with basis $G$. Hence
\begin{equation}\label{E FG=}
 FG=\big\{a(x)+a'(x)y\:\big|\; a(x),a'(x)\in FH \big\}.
\end{equation}
Any left ideal $C$ (i.e., any $FG$-submodule of the left regular module $FG$)
is denoted by $C\le FG$, and called a {\em dihedral code} over $F$ of length $2n$.

\begin{remark}\label{r FG=}\rm
 $FG$ has a subalgebra $FH$ such that
$FG=FH\oplus FH y$ 
with left multiplication by $y$ as follows: for $a(x)+a'(x)y\in FH\oplus FHy$,
\begin{equation}\label{left by y}
  y\big(a(x)+a'(x)y\big)=ya(x)+ya'(x)y=\overline{a'}(x)+\overline a(x)y.
\end{equation}
If we restrict the left regular $FG$-module $FG$ to an $FH$-module, then
we have an $FH$-isomorphism:
\begin{align}\label{dihedral to 2-cyclic}
 FG~ \stackrel{\cong}{\longrightarrow}~ FH\times FH, ~~ 
a(x)+a'(x)y~\longmapsto~ \big(a(x),\,a'(x)\big).
\end{align}
Thus, any dihedral code $C\le FG$ is restricted, 
in the natural way of~Eq.\eqref{dihedral to 2-cyclic},
to a $2$-quasi-cyclic code $C\le(FH)^2$; or shortly speaking,
dihedral codes are $2$-quasi-cyclic codes. 
The converse is not true.
Thus a natural question follows:

$\bullet$~ 
{\it What $2$-quasi-cyclic codes are dihedral codes? }

\noindent
And it is a special concern (e.g., \cite{AOS}, \cite{AGOSS}, \cite{MW},\cite{Musa})
that:

$\bullet$~ {\it What self-dual $2$-quasi-cyclic codes 
are dihedral codes?  }

\noindent
We answer the first question in this section. The special concern is left to 
Section~\ref{self-dual dihedral}, because we need more information on self-dual
$2$-quasi-cyclic codes to solve it.
\end{remark}

\begin{theorem}\label{2-quasi dihedral}
Let $C=(C_1\times C_2)\oplus\widehat C_{12}\le (FH)^2$ 
be a $2$-quasi-cyclic code, 
where $C_1,C_2,C_{12}\le FH$, $g\in C_{12}^\times$ 
and $\widehat C_{12}\le(FH)^2$ are as in Eq.\eqref{structure C}.
The following two are equivalent.

{\rm(1)} $C$ is a dihedral code.

{\rm(2)} $\overline C_1=C_2$, $\overline C_{12}=C_{12}$
and $\overline g g=e_{C_{12}}$, 
where $e_{C_{12}}$ is defined in Lemma~\ref{C e_C}.
\end{theorem}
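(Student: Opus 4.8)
The plan is to translate ``$C$ is a dihedral code'' into invariance of $C$ under one explicit semilinear involution of $(FH)^2$, to put $\tau(C)$ back into the Goursat normal form of Theorem~\ref{Goursat C}, and then to read off the conditions by comparing the two presentations.

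First I would set up the involution. Under the $FH$-isomorphism $FG\cong FH\times FH$ of Eq.\eqref{dihedral to 2-cyclic}, left multiplication by $y$ becomes, by Eq.\eqref{left by y}, the map $\tau\colon(FH)^2\to(FH)^2$, $\tau(a,a')=(\overline{a'},\overline a)$. Since $C$ is already an $FH$-submodule and $FG=FH+FHy$, $C$ is a left ideal of $FG$ if and only if $\tau(C)\subseteq C$; as $\tau$ is a bijective involution this is equivalent to $\tau(C)=C$. Moreover $\tau$ is ``bar-semilinear'': $\tau(d\,v)=\overline d\,\tau(v)$ for $d\in FH$, $v\in(FH)^2$ (bar is an automorphism of $FH$ because $H$ is abelian). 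So the assertion to be proved is exactly: $\tau(C)=C$ $\iff$ the three conditions of (2) hold.

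Next I would put $\tau(C)$ into normal form. Using $C=(C_1\times 0)\oplus(0\times C_2)\oplus\widehat C_{12}$ (Corollary~\ref{dim C}) one computes $\tau(C_1\times 0)=0\times\overline{C_1}$, $\tau(0\times C_2)=\overline{C_2}\times 0$, and $\tau(c_{12},c_{12}g)=(\overline{c_{12}}\,\overline g,\,\overline{c_{12}})$. Writing $e=\overline{c_{12}}\,\overline g$, which runs through $\overline{C_{12}}$ as $c_{12}$ runs through $C_{12}$ (because $\overline g$ is a unit of the ring $\overline{C_{12}}$), the diagonal part becomes $\{(e,\,e\,\overline g^{\,-1})\mid e\in\overline{C_{12}}\}$. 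Since bar is a ring automorphism, $\overline{C_1},\overline{C_2},\overline{C_{12}}\le FH$ still satisfy $\overline{C_1}\cap\overline{C_{12}}=0=\overline{C_2}\cap\overline{C_{12}}$ and $\overline g^{\,-1}\in(\overline{C_{12}})^{\times}$, so $\tau(C)=(\overline{C_2}\times\overline{C_1})\oplus\{(e,e\,\overline g^{\,-1})\mid e\in\overline{C_{12}}\}$ is again in the normal form of Theorem~\ref{Goursat C}, now with data $(\overline{C_2},\,\overline{C_1},\,\overline{C_{12}},\,\overline g^{\,-1})$.

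Finally I would compare the two forms. For (2)$\Rightarrow$(1) a direct substitution suffices: given $\overline{C_1}=C_2$, $\overline{C_{12}}=C_{12}$ and $\overline g g=e_{C_{12}}$, take $v=(c_1+c_{12},\,c_2+c_{12}g)\in C$, apply $\tau$, set $c_{12}'=\overline{c_{12}}\,\overline g\in C_{12}$, and use $c_{12}'g=\overline{c_{12}}\,\overline g g=\overline{c_{12}}\,e_{C_{12}}=\overline{c_{12}}$ to recognize $\tau(v)=(\overline{c_2}+c_{12}',\,\overline{c_1}+c_{12}'g)$ as an element of $C$ via Eq.\eqref{structure C} (here $\overline{c_2}\in\overline{C_2}=C_1$ and $\overline{c_1}\in\overline{C_1}=C_2$). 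For (1)$\Rightarrow$(2): from $\tau(C)=C$ and the easy identity $C\cap(FH\times 0)=C_1\times 0$ (a consequence of Eq.\eqref{structure C} and $g\in C_{12}^{\times}$), applying $\tau$ to $C\cap(0\times FH)=0\times C_2$ gives $\overline{C_2}\subseteq C_1$, hence $\overline{C_1}=C_2$ after applying bar; comparing first projections, $\rho_1(C)=C_1\oplus C_{12}$ while $\rho_1(\tau(C))=\overline{\rho_2(C)}=C_1\oplus\overline{C_{12}}$, and cancelling the (distinct) primitive-idempotent components of $C_1$ forces $\overline{C_{12}}=C_{12}$; with that in hand $\overline{e_{C_{12}}}=e_{C_{12}}$ and $\widehat C_{12}=e_{C_{12}}C$, so semilinearity gives $\tau(\widehat C_{12})=\widehat C_{12}$, and applying $\tau$ to $(e_{C_{12}},g)\in\widehat C_{12}$ produces $(\overline g,e_{C_{12}})\in\widehat C_{12}$, which reads $\overline g g=e_{C_{12}}$. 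The step needing the most care is this last implication: the ``cancellation of components'' must be justified by tracking precisely which primitive idempotents of $FH$ occur in $C_1$, $C_{12}$, $\overline{C_{12}}$ (via Lemmas~\ref{C C'} and~\ref{g g'}), so that the Goursat data of $C$ and of $\tau(C)$ may legitimately be compared piece by piece; everything else is the semilinear-involution computation above.
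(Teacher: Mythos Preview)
Your argument is correct and is essentially the same as the paper's: both directions are proved by analyzing how left multiplication by $y$ (your involution $\tau$) interacts with the three pieces $C_1\times 0$, $0\times C_2$, $\widehat C_{12}$ of the Goursat decomposition, and the key identity $\overline g\,g=e_{C_{12}}$ is extracted in both cases from $(\overline g,\,e_{C_{12}})\in\widehat C_{12}$. The only cosmetic difference is that you package the computation as ``compute the Goursat data of $\tau(C)$ and compare'', whereas the paper works element by element (and obtains $\overline C_{12}=C_{12}$ via a primitive-idempotent contradiction rather than your projection-and-cancellation step); one small slip in your write-up is that ``$\overline{C_2}\subseteq C_1$, hence $\overline{C_1}=C_2$ after applying bar'' needs the symmetric inclusion as well, which follows immediately since $\tau$ is a bijection exchanging $FH\times 0$ and $0\times FH$.
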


\begin{proof} (1)$\Rightarrow$(2). 
Assume that $C\le FG$. 
For any $c(x)+c'(x)y \in C$, we have (cf. Eq.\eqref{left by y}):
$$
y\big(c(x)+c'(x)y\big) =\overline{c'}(x)+\overline{c}(x)y \in C.
$$
Thus $y(C_1+ 0y)=0+\overline C_1y\subseteq C$, 
hence $\overline C_1\subseteq C_2$. 
Similarly, $\overline C_2\subseteq C_1$, i.e., 
$C_2\subseteq\overline C_1$. 
Thus $\overline C_1=C_2$, equivalently, $\overline E_{C_1}=E_{C_2}$. 
Next, for any $c(x)\in C_{12}$,  
\begin{equation}\label{dihdral 1}
 c(x)+(c(x)g(x))y\in C ~~\implies~~
 \overline{(c(x)g(x))}+\overline{c(x)}y\in C.
\end{equation} 
Take $\epsilon\in E_{C_{12}}$; we have
$\overline\epsilon (x)\overline g(x)+\overline\epsilon (x)y\in C$;
so $\overline\epsilon\in E_{C_{12}}$ 
(otherwise, $\overline\epsilon\in E_{C_2}$ hence 
$\epsilon\in\overline E_{C_2}= E_{C_1}$, 
a contradiction to that $E_{C_1}\cap E_{C_{12}}=\emptyset$).
So $\overline E_{C_{12}}=E_{C_{12}}$, hence 
$\overline e_{C_{12}}=e_{C_{12}}$, 
$\overline C_{12}=C_{12}$; and   
$\overline\epsilon (x)\overline g(x)+\overline\epsilon (x)y\in \widehat C_{12}$.
Thus 
$$\textstyle
 e_{C_{12}} (x)\overline g(x)+ e_{C_{12}} (x)y
=\sum_{\epsilon\in E_{C_{12}}}\!
\big(\overline\epsilon (x)\overline g(x)+\overline\epsilon (x)y\big) \in\widehat C_{12}.
$$
By the definition of $\widehat C_{12}$, 
$e_{C_{12}} (x)\overline g(x)\cdot g(x)=e_{C_{12}} (x)$.
Recalling that $e_{C_{12}}$ is the identity of $C_{12}$ and $g\in C_{12}^\times$,
we get $\overline{g}g=e_{C_{12}}$.

(2)$\Rightarrow$(1).  Let $C\le(FH)^2$ 
as in Theorem \ref{Goursat C}, and assume that 
$\overline C_1=C_2$, $\overline C_{12}=C_{12}$ and $\overline g g=e_{C_{12}}$.
Identify $C$ with the subset 
$$\{c(x)+c'(x)y\,|\,(c(x),c'(x))\in C\} \subseteq FG. $$
Then $C$ is obviously invariant by left $FH$-multiplication. It remains to prove
that $C$ is invariant by left multiplication by $y$. 
For $c(x)+c'(x)y\in C$,  by Eq.\eqref{structure C} we can set
$c(x)=c_1(x)+c_{12}(x)$ and $c'(x)=c_2(x)+c_{12}(x)g(x)$ 
with $c_1(x)\in C_1$, $c_2(x)\in C_2$ and $c_{12}(x)\in C_{12}$.
Then
$$
y\big(c(x)+c'(x)y\big)=\overline{c'(x)}+\overline{c(x)}\,y
=\overline{(c_2(x)+c_{12}(x)g(x))}+\overline{(c_1(x)+c_{12}(x))}\;y,
$$
i.e.,
\begin{equation}\label{dihedral 2}
y\big(c(x)+c'(x)y\big)
=\big(\overline c_2(x) +\overline c_{12}(x)\overline g(x)\big)
 + \big(\overline c_1(x)+\overline c_{12}(x)\big)\,y.
\end{equation}
Because $\overline C_1=C_2$, $\overline C_{12}=C_{12}$, we have
$$
\overline c_2(x)\in C_1, ~~~~~ \overline c_1(x)\in C_2; ~~~~~
\overline c_{12}(x)\overline g(x), ~
\overline c_{12}(x)\in C_{12}. 
$$
Further, by the assumption that $g\overline g=e_{C_{12}}$, we get
\begin{equation}\label{dihedral 3}
\overline c_{12}(x)\overline g(x)\cdot g(x)=
\overline c_{12}(x) e_{C_{12}}(x)=\overline c_{12}(x).
\end{equation}
By Theorem \ref{Goursat C} and cf. Eq.\eqref{more structure C},
we conclude that
$$
y\big(c(x)+c'(x)y\big)\in C,\quad \forall~ c(x)+c'(x)y\in C.
$$
we are done.
\end{proof}

\section{Self-dual $2$-quasi-cyclic codes}\label{self-dual 2-quasi-cyclic}
\def\cir{\mbox{\scriptsize$n$-cir}}

Keep the notation in Section \ref{2-quasi-cyclic}.
With the structure in Eq.\eqref{structure C}, 
in this section we characterize the self-dual $2$-quasi-cyclic codes.

Let ${\rm M}_{\cir}(F)$ be the set of the $n\times n$ circulant matrices over $F$.
Any $a(x)=a_0+a_1x+\cdots+a_{n-1}x^{n-1}\in FH$ 
determines a circulant matrix in ${\rm M}_{\cir}(F)$, whose first row is 
$(a_0,a_1,\cdots,a_{n-1})$, and each next row is obtained by
right shift the previous row.  
Let $P$ be the circulant matrix determined by $x$, i.e.,
\begin{equation}\label{P}
P=\begin{pmatrix}
0 & 1 & 0 & \cdots & 0\\
0 & 0 & 1 & \cdots & 0\\
\vdots & \vdots & \ddots & \ddots & \vdots\\
0 & 0 & 0 & \ddots & 1\\
1 & 0 & 0 & \cdots & 0
\end{pmatrix}.
\end{equation}
Then the circulant matrix determined by $a(x)=a_0+a_1x+\cdots+a_{n-1}x^{n-1}$ is
$$
  a(P)=a_0 I+a_1P+\cdots+a_{n-1}P^{n-1},
$$ 
where $I$ is the identity matrix. 
It is well-known (e.g., cf. \cite{G06}, and also easy to check) that 
\begin{equation}\label{FH cir}
FH~\longrightarrow~{\rm M}_{\cir}(F),~~
a(x)~\longmapsto~a(P),
\end{equation}
is an $F$-algebra isomorphism.

For any matrix $A$, by $A^T$ we denote the transpose of $A$.

\begin{lemma}\label{a^T=bar a}
For $a(x)=a_0+a_1x+\cdots+a_{n-1}x^{n-1}\in FH$, 
$$
\overline a(P)=a(P)^T.
$$
\end{lemma}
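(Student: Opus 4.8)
The statement to prove is that for $a(x)=a_0+a_1x+\cdots+a_{n-1}x^{n-1}\in FH$, the circulant matrix $\overline a(P)$ of the ``bar''-conjugate equals the transpose $a(P)^T$ of the circulant matrix of $a$. The plan is to reduce everything to the single matrix $P$ and then extend by $F$-linearity and multiplicativity, using the $F$-algebra isomorphism $FH\to{\rm M}_{\cir}(F)$ of Eq.\eqref{FH cir}. Since $\overline a(x)=a_0+a_1x^{-1}+\cdots+a_{n-1}x^{-(n-1)}=a_0+a_1x^{n-1}+\cdots+a_{n-1}x$, the map $a\mapsto\overline a$ is the $F$-linear algebra automorphism of $FH$ determined by $x\mapsto x^{-1}=x^{n-1}$; correspondingly, $\overline a(P)=a(P^{-1})=a(P^{n-1})$ under the isomorphism of Eq.\eqref{FH cir}.

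First I would verify the base case $\overline x(P)=x(P)^T$, i.e. $P^{-1}=P^T$. This is immediate from the explicit form of $P$ in Eq.\eqref{P}: $P$ is the permutation matrix of the $n$-cycle, so it is orthogonal, $PP^T=I$, hence $P^T=P^{-1}$; equivalently one checks directly that $P^T$ has a $1$ in position $(i+1,i)$, which is exactly $P^{n-1}$. Next, since transposition is an anti-automorphism of the matrix ring and ${\rm M}_{\cir}(F)$ is commutative (being the image of the commutative ring $FH$), transposition restricted to ${\rm M}_{\cir}(F)$ is in fact an $F$-algebra automorphism: $(AB)^T=B^TA^T=A^TB^T$ for $A,B\in{\rm M}_{\cir}(F)$. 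Therefore $a\mapsto a(P)^T$ is an $F$-algebra homomorphism $FH\to{\rm M}_{\cir}(F)$ (being the composition of the isomorphism of Eq.\eqref{FH cir} with the transposition automorphism). Likewise $a\mapsto\overline a(P)$ is an $F$-algebra homomorphism, being the composition of the ``bar'' automorphism of $FH$ with the isomorphism of Eq.\eqref{FH cir}.

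Now the two $F$-algebra homomorphisms $a\mapsto\overline a(P)$ and $a\mapsto a(P)^T$ from $FH$ to ${\rm M}_{\cir}(F)$ agree on the generator $x$ of the algebra (that is the base case $P^{-1}=P^T$), and both send $1$ to $I$; since $FH$ is generated as an $F$-algebra by $x$, they coincide on all of $FH$. Concretely, writing $a(x)=\sum_i a_i x^i$, one has $\overline a(P)=\sum_i a_i(P^{-1})^i=\sum_i a_i(P^i)^T=\big(\sum_i a_i P^i\big)^T=a(P)^T$, using linearity of transpose and $(P^i)^T=(P^T)^i=(P^{-1})^i$.

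I do not anticipate a genuine obstacle here; the only point requiring a moment's care is the observation that transposition is \emph{multiplicative} (not just anti-multiplicative) on the commutative subring ${\rm M}_{\cir}(F)$, which is what lets us phrase the argument as an equality of algebra homomorphisms rather than checking the $(i,j)$ entries by hand. If one prefers to avoid that remark, the purely computational alternative is to compare entries: the $(i,j)$ entry of $a(P)$ is $a_{(j-i)\bmod n}$, so the $(i,j)$ entry of $a(P)^T$ is $a_{(i-j)\bmod n}$, while $\overline a(x)=\sum_i a_i x^{-i}$ has $k$-th coefficient $a_{(-k)\bmod n}$, so the $(i,j)$ entry of $\overline a(P)$ is $a_{-(j-i)\bmod n}=a_{(i-j)\bmod n}$; these match. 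Either route closes the proof.
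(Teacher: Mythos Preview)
Your proof is correct and follows essentially the same approach as the paper: both hinge on the single observation $P^{-1}=P^T$ and then extend to $a(P)$ by linearity. The paper displays $P^{-1}$ explicitly and then describes the column structure of $\overline a(P)=\sum_i a_i P^{-i}$ to identify it with $a(P)^T$; your line $\overline a(P)=\sum_i a_i (P^{-1})^i=\sum_i a_i (P^i)^T=\big(\sum_i a_i P^i\big)^T$ makes that same step algebraically explicit, and your alternative entry-wise check is a harmless redundancy.
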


\begin{proof} By definition, $\overline a(x)=a_0+a_1x^{-1}+\cdots+a_{n-1}x^{-(n-1)}$. 
The inverse of $P$ in Eq.\eqref{P} is as follows:
\begin{equation}
P^{-1}=\begin{pmatrix}
0 & 0 & \cdots & 0 & 1\\
1 & 0 & \cdots & 0 & 0\\
0 & 1 & \ddots & 0 & 0\\
\vdots & \vdots & \ddots & \ddots & \vdots  \\
0 & 0 & \cdots & 1 & 0
\end{pmatrix}=P^T.
\end{equation}
Thus 
$\overline a(P)=a_0 I+a_1P^{-1}+\cdots+a_{n-1}P^{-(n-1)}$, 
whose first column is the transpose of $(a_0,a_1,\cdots,a_{n-1})$, 
and each next column is obtained by down shift the previous column. 
That is, $\overline a(P)=a(P)^T$.
\end{proof}

To distinguish the inner products on $FH$ and on $(FH)^2$, 
we denote them by $\langle -,-\rangle_{F\!H}$ and 
$\langle -,-\rangle_{(F\!H)^2}$
respectively. 

\begin{theorem}\label{self-dual C}
Let $C=(C_1\times C_2)\oplus\widehat C_{12}\le FH\times FH$
with $C_1,C_2,C_{12}\le FH$ and $g\in C_{12}^\times$ 
as in Theorem \ref{Goursat C}. 
Then the following {\rm(1)} and {\rm(2)} are equivalent to each other:

{\rm(1)} $C$ is self-dual.

{\rm(2)} The following three hold:

\quad {\rm(2.1)} $\dim_F C=n$, $\dim_F C_1=\dim_F C_2$; 

\quad {\rm(2.2)} 
$\big\langle C_1, C_1\big\rangle_{F\!H}=\big\langle C_2, C_2\big\rangle_{F\!H}
=\big\langle C_1, C_{12}\big\rangle_{F\!H}
=\big\langle C_2, C_{12}\big\rangle_{F\!H}=0$;

\quad {\rm(2.3)} $\overline C_{12}=C_{12}$ and $g\overline g=-e_{C_{12}}$, 
where $e_{C_{12}}$ is defined in Lemma \ref{C e_C}.
\end{theorem}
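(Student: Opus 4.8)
The plan is to translate the self-duality condition $C = C^\perp$ into statements about the Goursat data $C_1, C_2, C_{12}, g$ of Theorem~\ref{Goursat C}, using the dimension count of Corollary~\ref{dim C} together with Lemma~\ref{sigma bar}(2)--(3) to pass between the inner product on $(FH)^2$ and the ``bar'' map on $FH$. The first observation is that for $c=(c,c')$ and $d=(d,d')$ in $(FH)^2$ one has $\langle c, d\rangle_{(FH)^2} = \langle c, d\rangle_{FH} + \langle c', d'\rangle_{FH}$, so orthogonality of $C$ with itself decomposes into a sum of two $FH$-pairings; this is the computational backbone.

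First I would prove (1)$\Rightarrow$(2). Self-duality forces $\dim_F C = n$ immediately, and then $\dim_F C_1 = \dim_F C_2$ follows by a symmetry argument: the swap $(a,b)\mapsto(b,a)$ is an isometry of $(FH)^2$ sending $C$ to a self-dual code whose Goursat data has $C_1$ and $C_2$ interchanged, and since $\dim_F C = \dim_F C_1 + \dim_F C_2 + \dim_F C_{12} = n$ is fixed, comparing with the original forces $\dim_F C_1 = \dim_F C_2$ (this gives (2.1)). For (2.2), I would feed into $\langle C, C\rangle_{(FH)^2}=0$ the specific elements coming from the three summands $C_1\times 0$, $0\times C_2$, and $\widehat C_{12}$: pairing $C_1\times 0$ with itself gives $\langle C_1,C_1\rangle_{FH}=0$; pairing $0\times C_2$ with itself gives $\langle C_2,C_2\rangle_{FH}=0$; pairing $C_1\times 0$ against $\widehat C_{12}=\{(c_{12},c_{12}g)\}$ gives $\langle C_1, C_{12}\rangle_{FH}=0$; and similarly $\langle C_2,C_{12}\rangle_{FH}=0$. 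Finally, pairing $\widehat C_{12}$ with itself yields $\langle c_{12}, d_{12}\rangle_{FH} + \langle c_{12}g, d_{12}g\rangle_{FH}=0$ for all $c_{12},d_{12}\in C_{12}$; using Lemma~\ref{sigma bar}(2)--(3) this reads $\sigma\big(c_{12}\overline d_{12}(e_{C_{12}} + g\overline g)\big)=0$, and since this holds for all $c_{12},d_{12}$ in the ring $C_{12}=FHe_{C_{12}}$ with identity $e_{C_{12}}$, nondegeneracy of $\sigma$ on $C_{12}$ forces $g\overline g = -e_{C_{12}}$; in particular $g\overline g\in C_{12}$ forces $\overline g\in C_{12}$, hence $\overline C_{12}=C_{12}$ since $g$ is a unit of $C_{12}$. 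This is (2.3). One should also check that (2.1)+(2.2) together already give $C\subseteq C^\perp$, but I would defer that to the converse.

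For (2)$\Rightarrow$(1): conditions (2.2) and (2.3), via the displayed pairing decomposition and the same $\sigma$-computation run backwards, show that any two generators of $C$ drawn from the summands $C_1\times 0$, $0\times C_2$, $\widehat C_{12}$ are mutually orthogonal and each is self-orthogonal, hence $C\subseteq C^\perp$. Then (2.1) gives $\dim_F C = n = \tfrac12\dim_F(FH)^2$, so $\dim_F C = \dim_F C^\perp$ and therefore $C = C^\perp$.

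The step I expect to be the main obstacle is the clean extraction of $g\overline g = -e_{C_{12}}$ from the self-pairing of $\widehat C_{12}$: one must be careful that $\langle -,-\rangle_{FH}$ is the ambient pairing on all of $FH$ (not an intrinsic pairing on the subring $C_{12}$), that $\sigma$ restricted to $C_{12}$ is still nondegenerate (because $C_{12}$ is a direct sum of fields $FH\epsilon$, each with $\sigma|_{FH\epsilon}\ne 0$), and that $g$ ranging over $C_{12}^\times$ together with $\overline{(\cdot)}$ being a ring automorphism of $C_{12}$ legitimately pins down $g\overline g$ as an element of $C_{12}$ rather than of $FH$. Handling the bookkeeping of which $e_i$-components appear where — i.e., keeping $E_{C_1}, E_{C_2}, E_{C_{12}}$ disjoint and using Corollary~\ref{c C^bot} to see that (2.2) is equivalent to $\overline E_{C_1}\cap E_{C_1} = \overline E_{C_1}\cap E_{C_{12}} = \cdots = \emptyset$ — is the other place where care is needed, though it is routine once the framework of Section~\ref{preliminaries} is in hand.
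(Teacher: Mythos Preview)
Your $(2)\Rightarrow(1)$ direction is fine, but there is a genuine gap in $(1)\Rightarrow(2)$, in two related places.

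First, the swap argument for $\dim_F C_1=\dim_F C_2$ is vacuous: applying the isometry $(a,b)\mapsto(b,a)$ to $C$ produces another self-dual code $C'$ with Goursat data $(C_2,C_1,C_{12},g^{-1})$, and Corollary~\ref{dim C} applied to $C'$ gives $\dim_F C_2+\dim_F C_1+\dim_F C_{12}=n$, which is the \emph{same} equation you already had from~$C$. No new constraint appears.

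Second, and more seriously, the nondegeneracy step for $(2.3)$ is circular. From self-orthogonality of $\widehat C_{12}$ you correctly get $\sigma\big(c_{12}\,\overline{d_{12}}\,(e_{C_{12}}+g\overline g)\big)=0$ for all $c_{12},d_{12}\in C_{12}$. But $\overline{d_{12}}$ ranges only over $\overline C_{12}$, so $c_{12}\overline{d_{12}}$ ranges only over $C_{12}\cap\overline C_{12}$, and the vanishing forces $(e_{C_{12}}+g\overline g)\epsilon=0$ only for $\epsilon\in E_{C_{12}}\cap\overline E_{C_{12}}$. On components $\epsilon\in E_{C_{12}}\setminus\overline E_{C_{12}}$ there is no constraint; indeed one has $(g\overline g)\epsilon=0\ne-\epsilon$ there. (Concretely: take $C_{12}=FH\epsilon$ with $\epsilon\ne\overline\epsilon$ and $g=\epsilon$; then $\widehat C_{12}$ is self-orthogonal while $\overline C_{12}\ne C_{12}$ and $g\overline g=0\ne -e_{C_{12}}$.) Your derivation of $\overline g\in C_{12}$ \emph{from} $g\overline g=-e_{C_{12}}$ therefore assumes what it is meant to prove. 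The remark that ``$\overline{(\cdot)}$ is a ring automorphism of $C_{12}$'' is exactly the statement $\overline C_{12}=C_{12}$, which has not yet been established.

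The paper repairs both issues at once with a counting argument you did not use: from $\langle C_1,C_1\rangle_{FH}=\langle C_1,C_{12}\rangle_{FH}=0$ one gets $E_{C_1}\cap\overline E_{C_1}=\emptyset$ and $E_{C_{12}}\subseteq E\setminus(E_{C_1}\cup\overline E_{C_1})$; then $n=\dim C_1+\dim C_2+\dim C_{12}\le \dim C_1+\dim\overline C_1+\dim FHe'=n$ (with $e'=\sum_{\epsilon\in E\setminus(E_{C_1}\cup\overline E_{C_1})}\epsilon$) forces simultaneously $\dim C_1=\dim C_2$ and $E_{C_{12}}=E\setminus(E_{C_1}\cup\overline E_{C_1})$, the latter being bar-stable so $\overline C_{12}=C_{12}$. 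Once $\overline C_{12}=C_{12}$ is in hand, your $\sigma$-computation does cleanly yield $g\overline g=-e_{C_{12}}$ (and is in fact more direct than the paper's circulant-matrix route via Lemma~\ref{a^T=bar a}); but you must obtain $\overline C_{12}=C_{12}$ first, by the dimension argument, not afterwards.
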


\begin{proof}
(1)$\Rightarrow$(2). It is trivial that $\dim_F C=n$. 
By the self-duality of $C$,  
any two submodules (not necessarily different) of $C$ are orthogonal. 
By Corollary~\ref{dim C},
$C=(C_1\times 0)\oplus(0\times C_2)\oplus\widehat C_{12}$. 
Thus\vskip-2mm
\begin{equation}\label{inner FH FH^2}
\begin{array}{l}
\big\langle C_1, C_1\big\rangle_{F\!H}
 =\big\langle C_1\times 0, C_1\times 0\big\rangle_{(F\!H)^2}=0;
\\[3pt]
\big\langle C_2, C_2\big\rangle_{F\!H}
 =\big\langle 0\times C_2, 0\times C_2\big\rangle_{(F\!H)^2}=0;
\\[3pt]
\big\langle C_1, C_{12}\big\rangle_{F\!H}
 =\big\langle C_1\times 0, \widehat C_{12}\big\rangle_{(F\!H)^2}=0;
\\[3pt]
\big\langle C_2, C_{12}\big\rangle_{F\!H}
 =\big\langle 0\times C_2, \widehat C_{12}\big\rangle_{(F\!H)^2}=0.
\end{array}
\end{equation} 
Without lose of generality, we can assume that $\dim_F C_1\ge \dim_F C_2$. 
Note that $\big\langle C_1, C_1\big\rangle_{F\!H}
=\big\langle C_1, C_{12}\big\rangle_{F\!H}=0$ and $C_1\cap C_{12}=0$.  
Using notation in Lemma~\ref{C e_C}, 
by Lemma \ref{l C^bot} and its corollary, we have 
$$
 E_{C_1}\cap\overline E_{C_1}=\emptyset,~~
 E_{C_1}\cup\overline E_{C_1}\subseteq E'', ~~ 
 E_{C_{12}}\subseteq E-(E_{C_1}\cup\overline E_{C_1}).
$$
By Lemma \ref{C C'}, 
we see that the submodule $\overline C_1=FH\overline e_{C_1}$ 
corresponds to the subset $\overline E_{C_1}\subseteq E$; hence
$C_1\cap\overline C_1=0$; cf. Lemma \ref{C C'}. Denote 
$$\textstyle
  e'=\sum_{\epsilon\in E-(E_{C_1}\cup\overline E_{C_1})}\epsilon.
$$
Then $\dim_F C_{12}=\dim_F FHe_{C_{12}}\le\dim_F FHe'$. 
By Corollary \ref{dim C}, we get
\begin{align*}
n=\dim_F C &=\dim_F C_1+\dim_F C_2+\dim_F C_{12}\\
 & \le \dim_F C_1 + \dim_F \overline C_1 +\dim_F FH e' =n.
\end{align*}
Therefore, both
``$\dim_F C_2\le\dim_F C_1$'' and 
``$\dim_F FH e_{C_{12}}\le\dim_F FHe'$'' have to be equalities, i.e., 
$$
 \dim C_2=\dim C_1 ~~~ \mbox{and}~~~
 E_{C_{12}}= E-(E_{C_1}\cup\overline E_{C_1}).
$$
The latter one implies that $\overline E_{C_{12}}=E_{C_{12}}$, hence
$\overline e_{C_{12}}=e_{C_{12}}$ and $\overline C_{12}=C_{12}$.

Finally, by Eq.\eqref{structure C}, 
$(e_{C_{12}},g)$ is a generator of the $FH$-submodule $\widehat C_{12}$ 
of $(FH)^2$, i.e., $\widehat C_{12}=FH\!\cdot\!(e_{C_{12}},g)$.
Let $e_{C_{12}}(P)$, $g(P)$ be the circulant matrices
determined by the $e_{C_{12}}(x)$, $g(x)\in FH$ respectively, 
cf.~Eq.\eqref{FH cir}.
Because $(e_{C_{12}},\,g)$ generates $\widehat C_{12}$, 
 the rows of the double circulant matrix $\big(e_{C_{12}}(P)\,~g(P)\big)$ 
 linearly generate $\widehat C_{12}$.
Since $\big\langle \widehat C_{12},\,\widehat C_{12}\big\rangle_{(F\!H)^2}=0$,
we get that
\begin{equation}\label{AA^T}
e_{C_{12}}(P)e_{C_{12}}(P)^T+g(P)g(P)^T=
\big(e_{C_{12}}\!(P)\,~ g(P)\big)\big(e_{C_{12}}\!(P)\,~ g(P)\big)^T=0.
\end{equation}
By Lemma \ref{a^T=bar a}
$$
e_{C_{12}}(P)\overline e_{C_{12}}(P)+g(P)\overline g(P)=0.
$$ 
By the isomorphism Eq.\eqref{FH cir}, we get
\begin{equation}\label{e bar e}
(e_{C_{12}}\overline e_{C_{12}} + g\overline g)(P)=0,
\end{equation}
and hence $e_{C_{12}}\overline e_{C_{12}} + g\overline g=0$, i.e., 
$g\overline g=-e_{C_{12}}$, 
because $\overline e_{C_{12}}=e_{C_{12}}$.

(2)$\Rightarrow$(1). 
The above arguments are in fact invertible.
From the assumption~(2.2) and cf.~Eq.\eqref{inner FH FH^2}, we can get
\begin{align*}
 &\big\langle C_1\times 0,\,C_1\times 0 \big\rangle_{(F\!H)^2}=0,\quad
 \big\langle C_1\times 0,\, \widehat C_{12} \big\rangle_{(F\!H)^2}=0;\\
 &\big\langle 0\times C_2,\, 0\times C_2 \big\rangle_{(F\!H)^2}=0, \quad
 \big\langle 0\times C_2,\,\widehat C_{12} \big\rangle_{(F\!H)^2}=0.
\end{align*}
By the assumption~(2.3), we can get Eq.\eqref{e bar e}, 
and then Eq.\eqref{AA^T}, and therefore, 
$\big\langle \widehat C_{12},\, \widehat C_{12} \big\rangle_{(F\!H)^2}=0$.
It is trivial that 
$\big\langle C_1\times 0,\,0\times C_2 \big\rangle_{(F\!H)^2}=0$.
Since $C=(C_1\times 0)\oplus(0\times C_2)\oplus\widehat C_{12}$, 
see Corollary \ref{dim C}, we see that $C$ is self-orthogonal.
Finally, by the assumption~(2.1), $C$ is self-dual, i.e., (1) holds.
\end{proof}

\section{Self-dual dihedral codes}\label{self-dual dihedral}
In this section we assume that the characteristic ${\rm char} F=2$, and
show complete solutions of Question 3 and Question 4 raised in Introduction.
As for the case of odd ${\rm char} F$, 
we'll answer the questions in the next section.

Note that we have assumed that $\gcd(n,q)=1$, so $n$ is odd in this section.

Let $H=\{1,x,\cdots,x^{n-1}\}$, $T=\{1,y\}$, 
$G=H\rtimes T$ with $yxy^{-1}=x^{-1}$ as before, 
and $FG$ be the dihedral group algebra as in Eq.\eqref{E FG=}.

\begin{theorem}\label{t char even}
Keep the assumption as above (specifically, ${\rm char} F=2$).
Let $C=(C_1\times C_2)\oplus \widehat C_{12}\le(FH)^2$ in Eq.\eqref{structure C}
be a self-dual $2$-quasi-cyclic code.
Then $C$ is a dihedral code if and only if $\overline C_1=C_2$.  
\end{theorem}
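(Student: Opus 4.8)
The plan is to deduce this from the general dihedral criterion in Theorem~\ref{2-quasi dihedral} combined with the self-dual criterion in Theorem~\ref{self-dual C}, using the hypothesis ${\rm char}\,F=2$ to collapse the sign discrepancy. Recall that Theorem~\ref{2-quasi dihedral} says $C$ is dihedral if and only if the three conditions $\overline C_1=C_2$, $\overline C_{12}=C_{12}$, and $\overline g\,g=e_{C_{12}}$ all hold. The "only if" direction of the present statement is then immediate: if $C$ is dihedral, then in particular $\overline C_1=C_2$. So the entire content is the "if" direction.

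For the "if" direction, assume $C$ is self-dual and $\overline C_1=C_2$. First I would invoke Theorem~\ref{self-dual C}, condition (2.3): since $C$ is self-dual we already get $\overline C_{12}=C_{12}$ for free, so two of the three conditions of Theorem~\ref{2-quasi dihedral} are in hand (one from the self-duality hypothesis, one from the assumption $\overline C_1=C_2$). The only thing left to verify is the multiplicative condition $\overline g\,g=e_{C_{12}}$. But Theorem~\ref{self-dual C}(2.3) gives $g\overline g=-e_{C_{12}}$, and since ${\rm char}\,F=2$ we have $-e_{C_{12}}=e_{C_{12}}$; moreover $FH$ is commutative, so $\overline g\,g=g\overline g=-e_{C_{12}}=e_{C_{12}}$. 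Hence all three conditions of Theorem~\ref{2-quasi dihedral}(2) are satisfied, and therefore $C$ is a dihedral code.

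I would present this as a short two-paragraph proof: one sentence for the "only if" direction citing Theorem~\ref{2-quasi dihedral}, then a paragraph assembling the three conditions for the converse, with the key observation being that in characteristic $2$ the self-dual relation $g\overline g=-e_{C_{12}}$ coincides with the dihedral relation $\overline g\,g=e_{C_{12}}$. There is essentially no obstacle here beyond bookkeeping; the only point that needs care is making explicit that $\overline C_1 = C_2$ is equivalent (via the "bar" map being an involution and $\overline{\overline{C_1}}=C_1$) to $\overline C_2=C_1$, so that the symmetric roles of $C_1,C_2$ in Theorem~\ref{2-quasi dihedral} are correctly matched — but this is routine given Lemma~\ref{C C'}(3). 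So the "main obstacle" is really just correctly citing which earlier results supply which of the three clauses, and noting the characteristic-$2$ sign collapse; no new computation is required.
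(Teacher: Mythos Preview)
Your proposal is correct and follows essentially the same approach as the paper: the paper's proof observes that self-duality (via Theorem~\ref{self-dual C}(2.3)) already gives $\overline C_{12}=C_{12}$ and $g\overline g=-e_{C_{12}}$, notes that ${\rm char}\,F=2$ turns the latter into $g\overline g=e_{C_{12}}$, and then invokes Theorem~\ref{2-quasi dihedral} to conclude that the dihedral criterion reduces to the single remaining condition $\overline C_1=C_2$. The paper handles both directions in one stroke rather than splitting them, but the logical content is identical to yours.
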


\begin{proof}
Since $C$ is self-dual, by Theorem \ref{self-dual C},  $C$ 
satisfies the three conditions in Theorem \ref{self-dual C}(2);
in particular,  
$\overline C_{12}=C_{12}$ and $g\overline g=-e_{C_{12}}$,  
      cf. Theorem \ref{self-dual C}(2.3).
Since ${\rm char} F=2$, $g\overline g=-e_{C_{12}}=e_{C_{12}}$.
By Theorem \ref{2-quasi dihedral}, 
the theorem holds at once.
\end{proof}

The result \cite[Theorem 2]{AOS} is extended as follows.

\begin{corollary}\label{principal self-dual}
Any principal self-dual $2$-quasi-cyclic code over~$F$ 
(with ${\rm char} F=2$) is dihedral.
In particular, any self-dual double circulant code over~$F$ 
(with ${\rm char} F=2$) is dihedral.
\end{corollary}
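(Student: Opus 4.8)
The plan is to deduce Corollary~\ref{principal self-dual} from Theorem~\ref{t char even} together with the structural results on principal $2$-quasi-cyclic codes already established. Recall from Corollary~\ref{generators of C}(2) that a $2$-quasi-cyclic code $C=(C_1\times C_2)\oplus\widehat C_{12}$ is principal precisely when $E_{C_1}$, $E_{C_2}$, $E_{C_{12}}$ are pairwise disjoint, i.e.\ $C$ has no repeated $FH$-composition factors. So the first step is: assume $C$ is principal and self-dual with ${\rm char}\,F=2$; by Theorem~\ref{t char even} it suffices to show $\overline C_1=C_2$, equivalently $\overline E_{C_1}=E_{C_2}$.

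The key observation to extract is that in the proof of Theorem~\ref{self-dual C}, the self-duality forces $E_{C_{12}}=E-(E_{C_1}\cup\overline E_{C_1})$, together with $E_{C_1}\cap\overline E_{C_1}=\emptyset$ and $\dim_F C_1=\dim_F C_2$; moreover one gets $E_{C_1}\cup\overline E_{C_1}\subseteq E''$ and $E_{C_2}\cap\overline E_{C_2}=\emptyset$ by the same symmetric reasoning applied to $C_2$. So the second step is to invoke these facts as established consequences of self-duality. Now use principality: since $E_{C_1}$ and $E_{C_{12}}$ are disjoint and $E_{C_{12}}=E-(E_{C_1}\cup\overline E_{C_1})$, we get $\overline E_{C_1}\subseteq E-E_{C_{12}}=E_{C_1}\cup\overline E_{C_1}$ minus nothing — more usefully, $E_{C_2}\subseteq E-E_{C_{12}}=E_{C_1}\cup\overline E_{C_1}$, and $E_{C_2}$ is disjoint from $E_{C_1}$, so $E_{C_2}\subseteq\overline E_{C_1}$. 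The third step is a dimension count: by (2.1), $\dim_F C_2=\dim_F C_1=\dim_F\overline C_1$, and the containment $E_{C_2}\subseteq\overline E_{C_1}$ between subsets of $E$ indexing these spaces then forces $E_{C_2}=\overline E_{C_1}$, i.e.\ $C_2=\overline C_1$. Then Theorem~\ref{t char even} applies and $C$ is dihedral.

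For the ``in particular'' clause: a self-dual double circulant code is in particular a double circulant code, hence principal by Theorem~\ref{generator matrix}(1)$\Leftrightarrow$(2) (it equals $FH\cdot(1,a)$ for some $a$), and it is self-dual by hypothesis, so the first part of the corollary applies and it is dihedral.

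I do not expect a serious obstacle here — the corollary is essentially a bookkeeping consequence of the three pieces (principality $\Leftrightarrow$ distinct composition factors from Corollary~\ref{generators of C}, the support identity $E_{C_{12}}=E-(E_{C_1}\cup\overline E_{C_1})$ extracted from the proof of Theorem~\ref{self-dual C}, and the criterion $\overline C_1=C_2$ from Theorem~\ref{t char even}). The one point requiring mild care is making sure the chain of set inclusions $E_{C_2}\subseteq E-E_{C_{12}}=E_{C_1}\cup\overline E_{C_1}$ combined with $E_{C_2}\cap E_{C_1}=\emptyset$ genuinely yields $E_{C_2}\subseteq\overline E_{C_1}$, and that the equal-dimension hypothesis upgrades this inclusion to an equality; both are immediate once one notes that $|\overline E_{C_1}|=|E_{C_1}|$ and each $FH\epsilon$ contributes the same dimension on both sides via the ``bar'' map permuting $E$.
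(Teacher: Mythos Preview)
Your proof is correct and follows essentially the same approach as the paper's: both use principality (Corollary~\ref{generators of C}(2)) to get pairwise disjointness of $E_{C_1},E_{C_2},E_{C_{12}}$, combine this with the set-theoretic constraints coming from self-duality, and conclude $\overline E_{C_1}=E_{C_2}$ by a dimension count before invoking Theorem~\ref{t char even}. The only cosmetic differences are that you extract $E_{C_{12}}=E-(E_{C_1}\cup\overline E_{C_1})$ directly from the proof of Theorem~\ref{self-dual C} (whereas the paper re-derives the needed relations from the orthogonality conditions in Theorem~\ref{self-dual C}(2.2)), and you establish the inclusion $E_{C_2}\subseteq\overline E_{C_1}$ while the paper establishes the reverse inclusion $\overline E_{C_1}\subseteq E_{C_2}$ --- in both cases the dimension equality $\dim_F C_1=\dim_F C_2$ upgrades the inclusion to an equality.
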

\begin{proof}
Let $C$ be a principal self-dual $2$-quasi-cyclic code over~$F$.
By Corollary~\ref{generators of C}(2), 
$E_{C_1}\cap E_{C_2}=E_{C_1}\cap E_{C_{12}}
 =E_{C_2}\cap E_{C_{12}}=\emptyset$. 
By Theorem~\ref{self-dual C}(2), $\dim_FC=n$, 
$\dim_F C_1=\dim_F C_2$,
$\langle C_1,C_1\rangle_{FH}=0$ and 
$\langle C_1,C_{12}\rangle_{FH}=0$.
Thus, $E_{C_{12}}\cup E_{C_1}\cup E_{C_2}=E$ (as $\dim_FC=n$),
and $E_{C_1}\cap \overline E_{C_1}
=E_{C_{12}}\cap \overline E_{C_1}=\emptyset$.
Then we have 
$E_{C_{12}}\cup E_{C_1}\cup \overline E_{C_1}
\subseteq E_{C_{12}}\cup E_{C_1}\cup E_{C_2}$, 
where the both sides are disjoint unions. So
$\overline E_{C_1}\subseteq E_{C_2}$, 
i.e., $\overline C_1\subseteq C_2$.
Hence $\overline C_1 = C_2$ (as $\dim_F C_1=\dim_F C_2$).
By Theorem~\ref{t char even}, $C$ is dihedral.

Any self-dual double circulant code over~$F$
 is principal (see Theorem~\ref{generator matrix}), hence is dihedral.
\end{proof}

We answer Question 4 in introduction: 
in what case any self-dual $2$-quasi-cyclic code is dihedral? 
We take the notation in Eq.\eqref{E' E''} and Remark \ref{n and E'}.
The expression $2^v\Vert m$ means that $2^v\!\mid\! m$ and $2^{v+1}\!\nmid\! m$.

\begin{theorem}\label{c char even}
Let notation be as above (specifically, ${\rm char} F=2$ hence $n$ is odd). 
The following six are equivalent to each other. 

{\rm(1)} Any self-dual $2$-quasi-cyclic code over $F$ of length $2n$ is a dihedral code.

{\rm(2)} Any self-dual $2$-quasi-cyclic code over $F$ of length $2n$ 
is a double circulant code.

{\rm(3)}  $E''=\emptyset$, i.e., 
$-1\in\langle q\rangle_{{\Bbb Z}_n^\times}$ (cf. Remark \ref{n and E'}).

{\rm(4)} Any cyclic code over $F$ of length $n$ is an LCD code.

{\rm(5)} There is no non-zero self-orthogonal cyclic code over $F$ of length $n$.

{\rm(6)} For any prime divisors $p, p'$ of $n$, 
$v(p)=v(p')\ge 1$ , where $v(p)$ and $v(p')$ are the integers such that
$2^{v(p)}\Vert\,{\rm ord}_{{\Bbb Z}_{p}^\times}(q)$ and
$2^{v(p')}\Vert\,{\rm ord}_{{\Bbb Z}_{p'}^\times}(q)$. 
\end{theorem}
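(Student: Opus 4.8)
The plan is to prove the six statements equivalent by establishing a cycle of implications, routing everything through the combinatorial condition $E''=\emptyset$, which is the one that connects the coding-theoretic statements to number theory via Remark~\ref{n and E'}. First I would prove $(3)\Leftrightarrow(4)\Leftrightarrow(5)$, which are essentially restatements of Lemma~\ref{l C^bot} and its Corollary~\ref{c C^bot}: if $E''=\emptyset$ then $E=E'$, so every primitive idempotent is fixed by the bar map, hence every $FHe_i$ is LCD and every cyclic code $C=FHe_C$ satisfies $C\cap C^\bot=FHe_C\cap\bigl(\bigoplus_{\epsilon\in E-\overline E_C}FH\epsilon\bigr)=0$; conversely, if some $e_i\in E''$ then $FHe_i$ is a nonzero self-orthogonal cyclic code by Lemma~\ref{l C^bot}(2), killing (4) and (5). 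The implications $(4)\Rightarrow(5)$ and $(5)\Rightarrow(3)$ (contrapositive) are then immediate.

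Next I would handle $(2)\Rightarrow(1)$, which is trivial since double circulant codes are principal (Theorem~\ref{generator matrix}) and Corollary~\ref{principal self-dual} already tells us principal self-dual $2$-quasi-cyclic codes are dihedral. For $(1)\Rightarrow(3)$ I would argue contrapositively: if $E''\ne\emptyset$, pick $\epsilon\in E''$, so $\overline\epsilon\ne\epsilon$, and build a self-dual $2$-quasi-cyclic code that is \emph{not} dihedral by taking $C_1=FH\epsilon$, $C_2=FH\overline\epsilon$, and $C_{12}$ the sum of $FH\delta$ over the remaining idempotents (which are bar-stable, since the remaining set $E-\{\epsilon,\overline\epsilon\}$ is bar-invariant and we can pair its non-fixed members up appropriately). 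Using Theorem~\ref{self-dual C} one checks self-duality: $C_1,C_2$ are self-orthogonal and orthogonal to $C_{12}$ by Lemma~\ref{l C^bot}, and on the bar-stable part $C_{12}$ one chooses $g$ with $g\overline g=-e_{C_{12}}$ (possible since $-e_{C_{12}}$ is a square-class issue resolvable componentwise in the fields $FH\delta$; in characteristic $2$ it reads $g\overline g=e_{C_{12}}$, and $g=e_{C_{12}}$ works on the bar-fixed components, while bar-swapped pairs are handled by the standard $(u,\overline u^{-1})$ trick). Then $\overline C_1=FH\overline\epsilon=C_2$ would force, via Theorem~\ref{t char even}, dihedrality — but we instead arrange $C_1,C_2$ on a bar-\emph{stable} pair differently, or more simply: the counterexample of Example~\ref{counterexample}(1) is exactly of this shape, and I would generalize it, showing $yg_1\notin C$ directly. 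The cleanest route is: since $\epsilon\ne\overline\epsilon$, the code $C$ above has $\overline C_1=C_2$ but one can instead take $C_1'=FH\epsilon\oplus FH\delta_0$ for a bar-fixed $\delta_0$ and $C_2'=FH\overline\epsilon$ so that $\overline C_1'\ne C_2'$, contradicting Theorem~\ref{t char even}; dimension bookkeeping and Theorem~\ref{self-dual C}(2.1) need $\dim C_1=\dim C_2$, so one must be careful to keep the self-dual structure intact while breaking $\overline C_1=C_2$.

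Finally $(3)\Leftrightarrow(6)$ is the number-theoretic heart and the step I expect to be the main obstacle. By Remark~\ref{n and E'}(2), condition (3) says $-1\in\langle q\rangle_{{\Bbb Z}_n^\times}$, equivalently ${\rm ord}_{{\Bbb Z}_n^\times}(q)$ is even and $q^{{\rm ord}/2}\equiv-1$. Writing $n=\prod p_j^{a_j}$, CRT gives ${\Bbb Z}_n^\times\cong\prod{\Bbb Z}_{p_j^{a_j}}^\times$ and $-1$ maps to the tuple of $-1$'s. Since $-1$ has order $2$ in each cyclic factor ${\Bbb Z}_{p_j^{a_j}}^\times$, one has $-1\in\langle q\rangle$ in the product iff $-1\in\langle q\bmod p_j^{a_j}\rangle$ for every $j$ \emph{and} the exponents of $q$ realizing $-1$ can be chosen simultaneously — which, because $-1$ is the unique element of order $2$ in each factor, happens iff in each factor $-1=q^{d_j/2}$ where $d_j={\rm ord}(q\bmod p_j^{a_j})$, and these are compatible iff $d_j/2$ have the same $2$-adic valuation pattern, i.e. $2^{v}\|d_j$ with the \emph{same} $v\ge1$ for all $j$. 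Reducing further from $p_j^{a_j}$ to $p_j$ (the $2$-part of ${\rm ord}_{{\Bbb Z}_{p^a}^\times}(q)$ equals that of ${\rm ord}_{{\Bbb Z}_p^\times}(q)$ since $p$ is odd, as $|{\Bbb Z}_{p^a}^\times|=p^{a-1}(p-1)$ contributes only odd $p$-power) yields exactly condition (6): for all prime divisors $p,p'$ of $n$, $v(p)=v(p')\ge1$ with $2^{v(p)}\|{\rm ord}_{{\Bbb Z}_p^\times}(q)$. I would spell out the ``$v\ge1$'' part carefully (it encodes that each order is even, i.e. $-1$ is actually a power of $q$ mod $p$, not just that $-1$ exists) and the ``same $v$'' part as the simultaneity/compatibility condition across the CRT factors — this synchronization argument is the delicate point and where I would be most careful to get the quantifiers right.
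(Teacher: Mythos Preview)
Your cycle is incomplete: you prove $(2)\Rightarrow(1)\Rightarrow(3)$ and $(3)\Leftrightarrow(4)\Leftrightarrow(5)\Leftrightarrow(6)$, but nothing gets you back to $(2)$. The paper closes the loop with $(3)\Rightarrow(2)$, and this step is short but essential: if $E''=\emptyset$, then for any self-dual $C=(C_1\times C_2)\oplus\widehat C_{12}$ the self-orthogonality $\langle C_1,C_1\rangle_{FH}=0$ from Theorem~\ref{self-dual C}(2.2) forces $E_{C_1}\subseteq E''=\emptyset$ by Corollary~\ref{c C^bot}, hence $C_1=0$; likewise $C_2=0$, so $C_{12}=FH$ and Theorem~\ref{generator matrix} gives double-circulantness. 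You should add this.

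More seriously, your $(1)\Rightarrow(3)$ construction does not work as written. Your first try, $C_1=FH\epsilon$, $C_2=FH\overline\epsilon$, gives $\overline C_1=C_2$, so by Theorem~\ref{t char even} that code \emph{is} dihedral --- you noticed this. But your proposed fix, $C_1'=FH\epsilon\oplus FH\delta_0$ with $\delta_0\in E'$ and $C_2'=FH\overline\epsilon$, fails on two counts: first, $\dim_F C_1'\neq\dim_F C_2'$, violating Theorem~\ref{self-dual C}(2.1); second, and fatally, $\delta_0\in E'$ means $FH\delta_0$ is LCD (Lemma~\ref{l C^bot}(1)), so $\langle C_1',C_1'\rangle_{FH}\neq 0$ and Theorem~\ref{self-dual C}(2.2) is violated --- your $C'$ is not even self-orthogonal. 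The paper's construction, which is precisely the generalization of Example~\ref{counterexample}(1) you allude to but do not write down, is to take $C_1=C_2=FHe_j$ (the \emph{same} ideal on both sides), $C_{12}=\bigoplus_{\epsilon\in E\setminus\{e_j,\overline e_j\}}FH\epsilon$, and $g=e_{C_{12}}$. Then all of Theorem~\ref{self-dual C}(2) holds (in particular $g\overline g=e_{C_{12}}=-e_{C_{12}}$ in characteristic~$2$), so $C$ is self-dual; but $\overline C_1=FH\overline e_j\neq FHe_j=C_2$, so by Theorem~\ref{t char even} it is not dihedral. That is the counterexample you need.

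On $(3)\Leftrightarrow(6)$: the paper simply cites the literature (Pless, Kathuria--Raka), whereas you sketch a direct CRT argument. Your sketch is on the right track, but be aware that the synchronization step (``same $v$'') is exactly the point that was handled incompletely in the early literature and only fixed later; if you pursue this route you will essentially be reproving the cited result, so make sure the compatibility of the exponents across CRT factors is argued carefully.
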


\begin{proof} 
(1)$\Rightarrow$(3). 
Suppose that (3) does not hold. Then we can find $e_j\in E''$, 
hence $e_j\ne \overline e_j\in E''$. 
Set $C=(C_1\times C_2)\oplus \widehat C_{12}\le (FH)^2$, where
\begin{align*}
 &\textstyle C_1= C_2=FH e_j, \quad  
 C_{12}=FH e'~~\mbox{where}~ 
 e'=\sum_{\epsilon\in E-\{e_j,\overline e_j\}}\epsilon;\\
 &\widehat C_{12}=\{(c,c)\,|\,c\in C_{12}\}\quad 
(\mbox{i.e.,~ take $g=e'$ in Theorem \ref{Goursat C}}).
\end{align*}
Then $\overline C_{12}=C_{12}$, $g\overline g=e'e'=e'=-e'$
(as ${\rm char} F=2$), and by Corollary~\ref{c C^bot}, 
$\langle C_1,C_1\rangle_{F\!H}=0$ and
$\langle C_1,C_{12}\rangle_{F\!H}=0$.
So, by Theorem \ref{Goursat C} and Theorem \ref{self-dual C},
$C=(C_1\times C_2)\oplus \widehat C_{12}$ is a 
self-dual $2$-quasi-cyclic code over $F$ of length $2n$.
Since $\overline e_j\ne e_j$ hence $\overline C_1\ne C_2$,  
by Theorem \ref{t char even}, $C$ is not a dihedral code.
That is a contradiction to (1). 

(3)$\Rightarrow$(2). 
For any self-dual $2$-quasi-cyclic code 
$C=(C_1\times C_2)\oplus \widehat C_{12}\le (FH)^2$,
by Theorem \ref{self-dual C}(2.2), 
$\langle C_1,C_1\rangle_{F\!H}=0$; 
hence $E_{C_1}\subseteq E''=\emptyset$
(cf.~Lemma \ref{l C^bot} and its corollary); 
thus $E_{C_1}=\emptyset$; consequently, $C_1=0$. Similarly, $C_2=0$.
Then $C_{12}=FH$ (because $\dim_F C=n$).
By Theorem \ref{generator matrix}, $C$ is a double circulant code.

(2)$\Rightarrow$(1). 
If $C$ is a self-dual $2$-quasi-cyclic code, then by (2) 
$C$ is a self-dual double circulant code, hence $C$ is dihedral 
by Corollary \ref{principal self-dual}.

(3)$\Leftrightarrow$(4). It follows from Lemma \ref{l C^bot} and its corollary. 

(3)$\Leftrightarrow$(5). It follows from Lemma \ref{l C^bot} and its corollary. 

(5)$\Leftrightarrow$(6). 
Note that $n$ is odd.
Thus the equivalence is a known result,  see \cite{P92} and \cite{KR}.
\big(Note: an incomplete proof of (5)$\Leftrightarrow$(6) appeared in \cite{P92}, 
a complete proof of it was given in~\cite{KR}. 
Or, one can check (3)$\Leftrightarrow$(6) directly by 
Chinese Remainder Theorem.\big)
\end{proof}

\begin{example}\label{even examples}\rm
Take $F=F_4=\{0,1,\omega,\omega^2\}$, where $1+\omega+\omega^2=0$. 
Take $n=3$. Then the set $E$ of primitive idempotents of $FH$ is: 
$E=\{e_0,e_1,\overline e_1\}$, where $e_0=1+x+x^2$, 
$e_1=1+\omega x+\omega^2 x^2$, $\overline e_1=1+\omega^2 x+\omega x^2$. 
The irreducible ideals 
$FHe_0$, $FHe_1$ and $FH\overline e_1$ are all $1$-dimensional.

(1)~ Take $C_1\!=\!C_2=F\!H e_1$, $C_{12}\!=\!F\!H e_0$, $g=e_0$. 
Then $C\!=\!(C_1\!\times\! C_2)\oplus\widehat C_{12}\le(FH)^2$
is a special case of the code constructed in the proof of (1)$\Rightarrow$(3)
of Theorem~\ref{c char even}; 
so $C$ is a self-dual $2$-quasi-cyclic code of length~$6$ over $F$,
but it is not a dihedral code (and hence it is not a double circulant code)
because $\overline C_1\ne C_2$, cf. Theorem~\ref{2-quasi dihedral}. 
This is just Example \ref{counterexample}(1).

(2)~ If we take 
$C_1=FH e_1$, $C_2=FH \overline e_1$, $C_{12}=FH e_0$ and $g=e_0$,
then $C=(C_1\times C_2)\oplus\widehat C_{12}\le(FH)^2$
is a self-dual $2$-quasi-cyclic code, and is a dihedral code 
(cf. Theorem~\ref{t char even} and Theorem~\ref{2-quasi dihedral}).
But, by Theorem \ref{generator matrix}, 
$C$ is not a double circulant code because $C_2\ne 0$.
This is just Example \ref{counterexample}(2).

(3)~ Let $C_1=C_2=0$, $C_{12}=FH$ and 
$g=\alpha e_0+\beta e_1+\gamma\overline e_1$ 
with $\alpha,\beta,\gamma\in F$ such that $\alpha^2=\beta\gamma=1$.
Then $g\in(FH)^\times$ and $g\overline g=1=-1$ (as ${\rm char}F=2$); 
and by Theorem~\ref{generator matrix} and Theorem~\ref{self-dual C}, 
$$
C=(C_1\times C_2)\oplus\widehat C_{12}
  =\widehat C_{12}=FH(1,g)\le(FH)^2
$$
is a self-dual double circulant code with a generator matrix~$\big(I~\,g(P)\big)$ 
where $g(P)$ is defined in Eq.\eqref{FH cir};
and by Corollary \ref{principal self-dual}, it is a dihedral code.
\end{example}

\section{The case of odd characteristic}\label{odd case}

In this section, the characteristic ${\rm char}F$ is always assumed to be odd,
and $\gcd(n,q)=1$. Note that, $n$ may be even.
It is known that self-dual $2$-quasi-cyclic codes over $F$ exist if and only if
$-1$ is a square element of~$F$, if and only if $4\,|\,q-1$, see \cite{LS03}
(or \cite{FL21} for more general $2$-quasi-abelian codes).  

Instead of the dihedral group $G=H\rtimes T$, 
in this subsection we consider the group $\tilde G=H\rtimes\tilde T$, 
which is a semidirect product of the cyclic $H$ of order~$n$ by a cyclic group 
$\tilde T=\{1,\tilde y,\tilde y^2,\tilde y^3\}$ of order~$4$,
with relation $\tilde yx\tilde y^{-1}=x^{-1}$. Obviously, $|\tilde G|=4n$.
The $\tilde G$ is called a dicyclic group in literature, e.g., in~\cite{BR}. 
Obviously, $Z=\{1,\tilde y^2\}$ is a central subgroup of $\tilde G$,
and $\tilde G/Z\cong G$.  

Let $F{*}G$ be the algebra over $F$ as follows:
$F{*}G$ is the vector space with basis 
$\{1,x,\cdots,x^{n-1},~\tilde y,\, x\tilde y,\,\cdots,\,x^{n-1}\tilde y\}$, 
and the multiplication is defined by the following  relation:
\begin{equation}\label{twisted}
 x^n=1,\quad \tilde y^2=-1,\quad \tilde y x =x^{-1}\tilde y.
\end{equation}
In other words, $F{*}G=F[X,Y]/\langle X^n-1,\,Y^2+1,\, XYX-Y\rangle$, 
where $F[X,Y]$ is the polynomial algebra of two variables $X$ and $Y$
which are not commutative each other. 
In the multiplicative group $(F{*}G)^\times$, 
$x$ and $\tilde y$ generate a subgroup which is just $\tilde G$, 
but the central subgroup $Z=\{1,\tilde y^2\}$ of $\tilde G$ 
is embedded in to~$F^\times$ as $\{1,-1\}$. 
We call $F{*}G$ a {\em consta-dihedral group algebra} (cf. \cite{SR}), 
or a {\em twisted dihedral group algebra} (cf. \cite[p.268]{CR}). 
And, any left ideal $C$ of $F{*}G$ 
(i.e., $F{*}G$-submodule $C$ of the left regular module $F{*}G$)
 is called a {\em consta-dihedral code} of length $2n$ over $F$, 
and is denoted by $C\le F{*}G$.

It is still true that $FH$ is a subalgebra of $F{*}G$, and 
\begin{equation*}
 F{*}G=FH\oplus FH\tilde y=\big\{a(x)+a'(x)\tilde y\:\big|\; a(x),a'(x)\in FH \big\}.
\end{equation*}
In particular, any submodule (left ideal) of $F{*}G$ is still restricted to a 
$2$-quasi-cyclic code.
However, for any $a(x)+a'(x)\tilde y\in F{*}G$, 
\begin{equation}\label{left by tilde y}
  \tilde y\big(a(x)+a'(x)\tilde y\big)=\tilde ya(x)+\tilde ya'(x)\tilde y
  =-\overline{a'}(x)+\overline a(x)\tilde y,
\end{equation}
since $\tilde ya'(x)\tilde y=\tilde ya'(x)\tilde y^{-1}\tilde y^2
  =\overline {a'}(x)\tilde y^2=-\overline{a'}(x)$. 
Compare it with Eq.\eqref{left by y}.

Modifying Theorem \ref{2-quasi dihedral}, we can get the following
result which answers the question: what %self-dual 
$2$-quasi-cyclic codes
are consta-dihedral codes?

\begin{theorem}\label{2-quasi consta-dihedral}
Let $C=(C_1\times C_2)\oplus\widehat C_{12}\le (FH)^2$ 
be a $2$-quasi-cyclic code, 
where $C_1,C_2,C_{12}\le FH$, $g\in C_{12}^\times$ 
and $\widehat C_{12}\le(FH)^2$ are as in Eq.\eqref{structure C}.
The following two are equivalent:

{\rm(1)} $C$ is a consta-dihedral code;

{\rm(2)} $\overline C_1\!=C_2$, $\overline C_{12}=C_{12}$
and $\overline g g=-e_{C_{12}}$, 
where $e_{C_{12}}$ is defined in Lemma~\ref{C e_C}.
\end{theorem}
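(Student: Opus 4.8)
The plan is to imitate the proof of Theorem~\ref{2-quasi dihedral} almost verbatim, keeping track of the one structural change: the left action of $\tilde y$ on $F{*}G$ carries an extra sign, as recorded in Eq.\eqref{left by tilde y}. So the work splits into two implications.

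For $(1)\Rightarrow(2)$, I would assume $C\le F{*}G$ and identify $C$ with the subset $\{c(x)+c'(x)\tilde y\mid (c(x),c'(x))\in C\}$. Applying $\tilde y$ to elements of the form $(c_1(x),0)$ with $c_1\in C_1$ gives, via Eq.\eqref{left by tilde y}, $\tilde y c_1(x)=\overline{c_1}(x)\tilde y\in C$, so $\overline C_1\subseteq C_2$; symmetrically $\overline C_2\subseteq C_1$, whence $\overline C_1=C_2$. (The sign does not intervene here because the first coordinate is $0$.) Next, for $c(x)\in C_{12}$ the element $c(x)+(c(x)g(x))\tilde y\in C$ maps under $\tilde y$ to $-\overline{(c(x)g(x))}+\overline{c}(x)\tilde y\in C$; as in Theorem~\ref{2-quasi dihedral} one argues that for $\epsilon\in E_{C_{12}}$ we must have $\overline\epsilon\in E_{C_{12}}$ (else $\epsilon\in\overline E_{C_2}=E_{C_1}$, contradicting $E_{C_1}\cap E_{C_{12}}=\emptyset$), so $\overline C_{12}=C_{12}$ and $\overline e_{C_{12}}=e_{C_{12}}$, and the displayed element lies in $\widehat C_{12}$. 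Summing over $\epsilon\in E_{C_{12}}$ yields $-e_{C_{12}}(x)\overline g(x)+e_{C_{12}}(x)\tilde y\in\widehat C_{12}$; by the definition of $\widehat C_{12}$ this forces $-e_{C_{12}}\overline g\cdot g=e_{C_{12}}$, i.e. $\overline g g=-e_{C_{12}}$ (using that $e_{C_{12}}$ is the identity of $C_{12}$ and $g\in C_{12}^\times$, so it is central and one may cancel).

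For $(2)\Rightarrow(1)$, I would again identify $C$ with a subset of $F{*}G$; it is visibly closed under left $FH$-multiplication, so it suffices to check closure under left multiplication by $\tilde y$. Writing a generic element of $C$ as $c(x)+c'(x)\tilde y$ with $c=c_1+c_{12}$, $c'=c_2+c_{12}g$ ($c_1\in C_1$, $c_2\in C_2$, $c_{12}\in C_{12}$) and using Eq.\eqref{left by tilde y}, one gets
\[
\tilde y\big(c(x)+c'(x)\tilde y\big)
=\big(-\overline c_2(x)-\overline c_{12}(x)\overline g(x)\big)
 +\big(\overline c_1(x)+\overline c_{12}(x)\big)\tilde y.
\]
Since $\overline C_1=C_2$ and $\overline C_{12}=C_{12}$, the summand $\overline c_1\in C_2$, the summand $-\overline c_2\in C_1$, and $\overline c_{12},\,\overline c_{12}\overline g\in C_{12}$; moreover $(-\overline c_{12}\overline g)\cdot g=-\overline c_{12}\overline g g=-\overline c_{12}(-e_{C_{12}})=\overline c_{12}$ by hypothesis $(2.3)$ of the theorem, so the first coordinate is exactly $(\text{element of }C_1)+(\text{element of }C_{12})$ with the $C_{12}$-part consistent with the $\widehat C_{12}$-relation. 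By the structure of $C$ in Eq.\eqref{more structure C} this shows $\tilde y\big(c(x)+c'(x)\tilde y\big)\in C$, and we are done.

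The only genuinely new point—hence the place where a little care is needed—is bookkeeping the sign: in the $(1)\Rightarrow(2)$ direction it turns $\overline g g=e_{C_{12}}$ into $\overline g g=-e_{C_{12}}$, and in the $(2)\Rightarrow(1)$ direction the $-\overline c_2$ still lands in $C_1$ (since $C_1$ is an $F$-subspace) and the two sign flips in $(-\overline c_{12}\overline g)g=-\overline c_{12}(\overline g g)=-\overline c_{12}(-e_{C_{12}})$ cancel. Everything else is identical to Theorem~\ref{2-quasi dihedral}. There is no serious obstacle; the proof is a routine adaptation, and I would likely write it as ``the same as the proof of Theorem~\ref{2-quasi dihedral}, replacing Eq.\eqref{left by y} by Eq.\eqref{left by tilde y}'' with the sign changes spelled out.
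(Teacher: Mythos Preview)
Your proposal is correct and is essentially the paper's own proof: the paper also says ``the proof is the same as the proof of Theorem~\ref{2-quasi dihedral}'' and then records exactly the two sign-sensitive changes you identified, namely that Eq.\eqref{dihdral 1} becomes $-\overline{(c(x)g(x))}+\overline{c}(x)\tilde y\in C$ (yielding $\overline g g=-e_{C_{12}}$) and that Eq.\eqref{dihedral 2}--\eqref{dihedral 3} acquire the extra minus signs that cancel via $(-\overline c_{12}\overline g)g=\overline c_{12}$.
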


\begin{proof} The proof is the same as the proof of Theorem \ref{2-quasi dihedral}, 
provided pay attention at two steps.
First, comparing Eq.\eqref{left by y} and Eq.\eqref{left by tilde y}, 
we see that Eq.\eqref{dihdral 1} turns into the following: 
\begin{equation*}
 c(x)+(c(x)g(x))\tilde y\in C ~~\implies~~
 -\overline{(c(x)g(x))}+\overline{c(x)}\tilde y\in C.
\end{equation*} 
which leads to $\overline g g=-e_{C_{12}}$.

Second, Eq.\eqref{dihedral 2} is changed as follows:
\begin{equation*}
\tilde y\big(c(x)+c'(x)\tilde y\big)
=-\big(\overline c_2(x) +\overline c_{12}(x)\overline g(x)\big)
 + \big(\overline c_1(x)+\overline c_{12}(x)\big)\,\tilde y.
\end{equation*}
Thus, from $\overline g g=-e_{C_{12}}$, Eq.\eqref{dihedral 3} is changed as
\begin{equation*}
-\overline c_{12}(x)\overline g(x)\cdot g(x)=
\overline c_{12}(x)\,e_{C_{12}}(x)=\overline c_{12}(x).
\end{equation*}
Then the proof would be completed.
\end{proof}

Similarly to Theorem \ref{t char even} and Corollary \ref{principal self-dual}, 
the following theorem and corollary follow at once. 

\begin{theorem}\label{t char odd}
Keep the notation as above (specifically, ${\rm char} F$ is odd).
Let $C=(C_1\times C_2)\oplus \widehat C_{12}\le(FH)^2$ in Eq.\eqref{structure C}
be a  self-dual $2$-quasi-cyclic code.
Then $C$ is a consta-dihedral code if and only if $\overline C_1=C_2$.  
\end{theorem}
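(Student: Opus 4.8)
The plan is to mirror the proof of Theorem~\ref{t char even} verbatim, using Theorem~\ref{2-quasi consta-dihedral} in place of Theorem~\ref{2-quasi dihedral}. First I would invoke Theorem~\ref{self-dual C}: since $C$ is a self-dual $2$-quasi-cyclic code, condition~(2.3) of that theorem gives $\overline C_{12}=C_{12}$ and $g\overline g=-e_{C_{12}}$ (equivalently $\overline g g=-e_{C_{12}}$, since $FH$ is commutative and the ``bar'' map is an algebra automorphism, so $\overline g g=\overline{g}\,\overline{\overline g}=\overline{\overline g\, g}$, and $\overline{e_{C_{12}}}=e_{C_{12}}$). These are precisely two of the three conditions appearing in part~(2) of Theorem~\ref{2-quasi consta-dihedral}, and crucially they hold automatically for \emph{every} self-dual $2$-quasi-cyclic code, with no hypothesis on $\overline C_1$.

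Next I would observe that the only remaining condition in Theorem~\ref{2-quasi consta-dihedral}(2) is $\overline C_1=C_2$. Hence $C$ is consta-dihedral if and only if $\overline C_1=C_2$, which is exactly the claimed equivalence. So the proof reduces to checking that the already-established facts $\overline C_{12}=C_{12}$ and $\overline g g=-e_{C_{12}}$ together with $\overline C_1=C_2$ satisfy the hypotheses of Theorem~\ref{2-quasi consta-dihedral}(2), and conversely that a consta-dihedral $C$ forces $\overline C_1=C_2$ via Theorem~\ref{2-quasi consta-dihedral}(1)$\Rightarrow$(2).

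I do not expect any genuine obstacle here: unlike the even-characteristic case, where one had to use ${\rm char}F=2$ to convert the self-duality condition $g\overline g=-e_{C_{12}}$ into the dihedral condition $g\overline g=e_{C_{12}}$, in odd characteristic the consta-dihedral condition in Theorem~\ref{2-quasi consta-dihedral} already asks for $\overline g g=-e_{C_{12}}$, which is exactly what self-duality supplies. The two sign conventions match by design of the twisted algebra $F{*}G$ (compare Eq.~\eqref{left by tilde y} with Eq.~\eqref{left by y}), so the matching is immediate rather than requiring a case analysis. The one small point to state carefully is the commutation $\overline g g = g \overline g$ and the symmetry of the condition $\overline C_{12}=C_{12}$, but these are routine consequences of $FH$ being commutative with the ``bar'' map an automorphism of order $2$. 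Thus the entire proof is a two-line citation: apply Theorem~\ref{self-dual C}(2.3) to obtain $\overline C_{12}=C_{12}$ and $g\overline g=-e_{C_{12}}$, then apply Theorem~\ref{2-quasi consta-dihedral} to conclude that consta-dihedrality is equivalent to the single outstanding condition $\overline C_1=C_2$.
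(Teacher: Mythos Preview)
Your proposal is correct and is precisely the argument the paper intends: the paper itself gives no separate proof for Theorem~\ref{t char odd}, stating only that it follows ``similarly to Theorem~\ref{t char even}'' --- i.e., invoke Theorem~\ref{self-dual C}(2.3) to get $\overline C_{12}=C_{12}$ and $g\overline g=-e_{C_{12}}$, then apply Theorem~\ref{2-quasi consta-dihedral} so that consta-dihedrality reduces to the single remaining condition $\overline C_1=C_2$. Your observation that no sign conversion is needed (unlike the ${\rm char}\,F=2$ case) is exactly the point.
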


\begin{corollary}\label{principal self-dual odd}
Any principal self-dual $2$-quasi-cyclic code over~$F$ 
(with ${\rm char} F$ odd) is consta-dihedral.
In particular, any self-dual double circulant code over~$F$ 
(with ${\rm char} F$ odd) is consta-dihedral.
\end{corollary}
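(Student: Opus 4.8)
The plan is to imitate, mutatis mutandis, the proof of Corollary~\ref{principal self-dual} in the even characteristic case, using Theorem~\ref{t char odd} in place of Theorem~\ref{t char even}. Let $C=(C_1\times C_2)\oplus\widehat C_{12}\le(FH)^2$ be a principal self-dual $2$-quasi-cyclic code over $F$. By Corollary~\ref{generators of C}(2), being principal is equivalent to $E_{C_1}\cap E_{C_2}=E_{C_1}\cap E_{C_{12}}=E_{C_2}\cap E_{C_{12}}=\emptyset$, so $E$ is the disjoint union of $E_{C_1}$, $E_{C_2}$, $E_{C_{12}}$ together with the remaining idempotents; but self-duality forces $\dim_F C=n$, and by Corollary~\ref{dim C} that dimension is $\dim_F C_1+\dim_F C_2+\dim_F C_{12}=\sum_{\epsilon\in E_{C_1}\cup E_{C_2}\cup E_{C_{12}}}\dim_F FH\epsilon$, which equals $n$ only if $E_{C_1}\cup E_{C_2}\cup E_{C_{12}}=E$. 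Hence $E=E_{C_1}\sqcup E_{C_2}\sqcup E_{C_{12}}$.

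Next I would extract the orthogonality information from Theorem~\ref{self-dual C}(2): in particular $\langle C_1,C_1\rangle_{FH}=0$ and $\langle C_1,C_{12}\rangle_{FH}=0$. By Lemma~\ref{l C^bot} and Corollary~\ref{c C^bot}, $\langle C_1,C_1\rangle_{FH}=0$ gives $E_{C_1}\cap\overline E_{C_1}=\emptyset$, and $\langle C_1,C_{12}\rangle_{FH}=0$ gives $\overline E_{C_1}\cap E_{C_{12}}=\emptyset$. Since $\overline E_{C_1}\subseteq E=E_{C_1}\sqcup E_{C_2}\sqcup E_{C_{12}}$ and $\overline E_{C_1}$ meets neither $E_{C_1}$ nor $E_{C_{12}}$, we conclude $\overline E_{C_1}\subseteq E_{C_2}$, i.e. $\overline C_1\subseteq C_2$. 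Applying the ``bar'' map (an automorphism of $FH$) and Theorem~\ref{self-dual C}(2.1), which gives $\dim_F C_1=\dim_F C_2$, we upgrade this to $\overline C_1=C_2$. Then Theorem~\ref{t char odd} applies directly and yields that $C$ is a consta-dihedral code.

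For the second assertion, any self-dual double circulant code is in particular a $2$-quasi-cyclic code that is principal: by Theorem~\ref{generator matrix}, $C=FH\cdot(1,a)$ for some $a\in FH$, so $C$ is generated by one element. Hence the first part applies and $C$ is consta-dihedral.

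I do not expect a genuine obstacle here: every ingredient (the dimension count via Corollary~\ref{dim C}, the translation of Hermitian-type orthogonality into conditions on the supports $E_{C_i}$ via Lemma~\ref{l C^bot} and Corollary~\ref{c C^bot}, and the final structural criterion Theorem~\ref{t char odd}) has already been assembled in the odd-characteristic section, exactly paralleling the even case. The only point requiring a moment's care is that the ``twist'' $g\overline g=-e_{C_{12}}$ demanded by Theorem~\ref{2-quasi consta-dihedral}(2) matches precisely the condition $g\overline g=-e_{C_{12}}$ that self-duality already supplies through Theorem~\ref{self-dual C}(2.3); in even characteristic these two signs coincide, whereas here the $-1$ is genuinely needed, and it is this agreement that makes Theorem~\ref{t char odd} the correct analogue of Theorem~\ref{t char even}. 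So the proof is essentially a citation of Theorem~\ref{t char odd} once $\overline C_1=C_2$ is established, and the latter is forced by the support bookkeeping above.
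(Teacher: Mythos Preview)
Your proof is correct and follows essentially the same route as the paper's: the paper states that the corollary ``follows at once'' similarly to Corollary~\ref{principal self-dual}, and your argument is precisely the proof of Corollary~\ref{principal self-dual} with Theorem~\ref{t char odd} substituted for Theorem~\ref{t char even}. The support bookkeeping (disjointness of $E_{C_1},E_{C_2},E_{C_{12}}$, the dimension count forcing $E_{C_1}\sqcup E_{C_2}\sqcup E_{C_{12}}=E$, and the orthogonality conditions pushing $\overline E_{C_1}$ into $E_{C_2}$) matches the paper's proof line for line.
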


\begin{theorem}\label{c char odd}
Let notation be as above (specifically, ${\rm char} F$ is odd). 
The following six are equivalent to each other:

{\rm(1)} Any self-dual $2$-quasi-cyclic code over $F$ of length $2n$ 
is a consta-dihedral code.

{\rm(2)} Any self-dual $2$-quasi-cyclic code over $F$ of length $2n$ 
is a double circulant code.

{\rm(3)}  $E''=\emptyset$, i.e., 
$-1\in\langle q\rangle_{{\Bbb Z}_n^\times}$ (cf. Remark \ref{n and E'}).

{\rm(4)} Any cyclic code over $F$ of length $n$ is an LCD code.

{\rm(5)} There is no non-zero self-orthogonal cyclic code over $F$ of length $n$.

{\rm(6)} 
If $4\nmid n$ then, for any odd prime divisors $p$, $p'$ of $n$, 
$v(p)=v(p')\ge 1$, where $v(p)$ and $v(p')$ are as the same as in
Theorem \ref{c char even}.
Otherwise, $q\equiv -1\pmod{2^\ell}$ where $2^\ell\Vert n$, and
for any odd prime divisor $p$ of $n$, $v(p)=1$.
\end{theorem}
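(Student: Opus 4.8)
\textbf{Proof proposal for Theorem \ref{c char odd}.}
The plan is to mirror the structure of the proof of Theorem \ref{c char even}, carrying over as much as possible and isolating the two places where odd characteristic genuinely intervenes: the twist $\tilde y^2=-1$ in the consta-dihedral algebra, and the possibility that $n$ is even. First I would establish the implications $(1)\Rightarrow(3)\Rightarrow(2)\Rightarrow(1)$ forming a cycle. For $(3)\Rightarrow(2)$: if $E''=\emptyset$, then for any self-dual $2$-quasi-cyclic code $C=(C_1\times C_2)\oplus\widehat C_{12}$, Theorem \ref{self-dual C}(2.2) gives $\langle C_1,C_1\rangle_{F\!H}=0$, so by Lemma \ref{l C^bot} and Corollary \ref{c C^bot} we get $E_{C_1}\subseteq E''=\emptyset$, hence $C_1=0$ and similarly $C_2=0$; then $\dim_FC=n$ forces $C_{12}=FH$, and Theorem \ref{generator matrix} says $C$ is double circulant. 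For $(2)\Rightarrow(1)$: a self-dual double circulant code is principal (Theorem \ref{generator matrix}), hence consta-dihedral by Corollary \ref{principal self-dual odd}. For $(1)\Rightarrow(3)$: suppose $E''\ne\emptyset$, pick $e_j\in E''$ so that $\overline e_j\ne e_j$ and $\overline e_j\in E''$; I need to exhibit a self-dual $2$-quasi-cyclic code that is not consta-dihedral. The natural candidate is $C_1=C_2=FHe_j$, $C_{12}=FHe'$ with $e'=\sum_{\epsilon\in E-\{e_j,\overline e_j\}}\epsilon$, but now I must choose $g\in C_{12}^\times$ with $g\overline g=-e_{C_{12}}=-e'$ (Theorem \ref{self-dual C}(2.3)) rather than $g=e'$. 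Such a $g$ exists precisely because self-dual $2$-quasi-cyclic codes of this shape exist — equivalently because $-1$ is a square in the relevant field components; here I would invoke the existence hypothesis $4\mid q-1$ (from \cite{LS03}, cited in the section preamble) componentwise on the fields $FH\epsilon$, $\epsilon\ne e_j,\overline e_j$. With such a $g$, Theorem \ref{self-dual C} makes $C$ self-dual, while $\overline C_1=FH\overline e_j\ne FHe_j=C_2$, so Theorem \ref{t char odd} shows $C$ is not consta-dihedral, contradicting (1).

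Next I would handle the remaining equivalences. The equivalences $(3)\Leftrightarrow(4)$ and $(3)\Leftrightarrow(5)$ follow verbatim from Lemma \ref{l C^bot} and Corollary \ref{c C^bot}, exactly as in Theorem \ref{c char even}: a cyclic code $C\le FH$ is LCD iff $E_C\cap\overline E_C=\emptyset$ for all $C$ iff $E''=\emptyset$, and a nonzero self-orthogonal cyclic code exists iff some primitive idempotent lies in $E''$. These are characteristic-free arguments, so no modification is needed.

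The genuinely new work is $(5)\Leftrightarrow(6)$, and this is the step I expect to be the main obstacle, because here $n$ may be even and the $2$-part of $n$ must be analyzed separately. The strategy is to reduce to a statement purely about $\mathbb{Z}_n^\times$ via Remark \ref{n and E'}(2): condition (3)/(5) says $-1\in\langle q\rangle_{\mathbb{Z}_n^\times}$, and I would check this by the Chinese Remainder Theorem, writing $n=2^\ell p_1^{a_1}\cdots p_r^{a_r}$ so that $\mathbb{Z}_n^\times\cong\mathbb{Z}_{2^\ell}^\times\times\prod_i\mathbb{Z}_{p_i^{a_i}}^\times$. Since $\gcd(n,q)=1$ and each $\mathbb{Z}_{p_i^{a_i}}^\times$ is cyclic, the condition $-1\in\langle q\rangle$ projected to the $p_i$-component means the odd part of $\mathrm{ord}_{\mathbb{Z}_{p_i^{a_i}}^\times}(q)$ is ``pushed out'' correctly — concretely, $-1$ is the unique element of order $2$ in that cyclic group, so $-1\in\langle q\rangle$ there iff $2^{v(p_i)}$ equals the full $2$-part of the group order over $\mathrm{ord}$... rather, iff $q$ has even order with $-1$ as the order-$2$ power, which happens iff the $2$-adic valuations $v(p_i):=v_2(\mathrm{ord}_{\mathbb{Z}_{p_i}^\times}(q))$ agree across all $i$ and are $\ge1$ (this is the classical fact reproved in \cite{P92,KR}, and lifting from $\mathbb{Z}_{p_i}$ to $\mathbb{Z}_{p_i^{a_i}}$ uses that $q$ has the same $2$-part of order modulo $p_i$ and modulo $p_i^{a_i}$ when $\gcd(n,q)=1$). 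The $2^\ell$-component is the new ingredient: if $\ell=0$ or $\ell=1$ it is trivial and imposes no condition (matching the first clause of (6), $4\nmid n$); if $\ell\ge2$, then $\mathbb{Z}_{2^\ell}^\times\cong\mathbb{Z}_2\times\mathbb{Z}_{2^{\ell-2}}$ is \emph{not} cyclic, and $-1\in\langle q\rangle_{\mathbb{Z}_{2^\ell}^\times}$ forces $q\equiv-1\pmod{2^\ell}$ (the only way a cyclic subgroup contains $-1=2^{\ell-1}-1$, which does not lie in the squares); and then on the odd primes the parity matching degenerates to $v(p)=1$ for all $p$, because the $2$-part of $\langle q\rangle$ is already $\langle -1\rangle$ of order $2$. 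I would assemble these componentwise facts into the two-clause statement of (6). The delicate bookkeeping — especially verifying that $q\equiv-1\pmod{2^\ell}$ is simultaneously necessary and sufficient for the $2^\ell$-component, and that this is compatible with $v(p)=1$ rather than some larger common value — is where I anticipate spending the most care; everything else is a transcription of the even-characteristic proof with $-e_{C_{12}}$ in place of $e_{C_{12}}$ and Theorem \ref{t char odd} in place of Theorem \ref{t char even}.
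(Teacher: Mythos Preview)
Your proposal is correct and follows the same overall approach as the paper, which simply states that the equivalences among (1)--(5) are proved exactly as in Theorem~\ref{c char even} and that (5)$\Leftrightarrow$(6) follows from \cite{L PhD,LLC}. You in fact supply more than the paper does: you correctly isolate the one genuine wrinkle in (1)$\Rightarrow$(3)---that one must now produce $g$ with $g\overline g=-e'$ rather than take $g=e'$, which requires $-1$ to be a square in $F$ (and indeed restricting any global $g_0\in(FH)^\times$ with $g_0\overline{g_0}=-1$ to $C_{12}$ does the job)---and you sketch the CRT argument for (5)$\Leftrightarrow$(6) rather than merely citing it.
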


\begin{proof}
The proof of the equivalences for (1), (2), (3), (4), (5) 
are the same as the proof of Theorem \ref{c char even}. 

(5)$\Leftrightarrow$(6). 
The equivalence follows from a known result, see \cite{L PhD} or~\cite{LLC}.
\big(Note: because here $n$ may be even, the result in \cite{KR}
is still incomplete; \cite{L PhD} and \cite{LLC} obtain the complete correct
result. Compare it with the proof of (5)$\Leftrightarrow$(6) 
of Theorem~\ref{c char even}.\big)
\end{proof}

\begin{example}\label{odd examples}\rm
Take $F=F_5=\{0,\pm1,\pm2\}$. 
Take $n=4$. Then the set $E$ of primitive idempotents of $FH$ is: 
$E=\{e_0,e_1,e_2,\overline e_2\}$, 
where $e_0=-1-x-x^2-x^3$, 
$e_1=-1+ x- x^2+x^3$, 
$e_2=-1+2 x+ x^2-2x^3$, 
$\overline e_2=-1-2 x+x^2+2x^3$.
The irreducible ideals of $FH$ are all $1$-dimensional.
Set $e=e_0+e_1$. Then $\dim_F FHe=2$, $\overline e=e$.

(1) Take $C_1\!=\!C_2\!=\!FH e_2$, $C_{12}\!=\!FH e$, $g\!=\!2e$.
Then $g\overline g=2e 2e=-e$.
By Theorem \ref{self-dual C} 
(the other conditions in the theorem can be checked by Corollary~\ref{c C^bot}), 
$C=(C_1\times C_2)\oplus\widehat C_{12}\le(FH)^2$
is a self-dual $2$-quasi-cyclic code over $F$ of length~$8$.
But, by Theorem \ref{2-quasi consta-dihedral}, $C$ is not a consta-dihedral code
because $\overline C_1\ne C_2$.
And, by Theorem \ref{generator matrix}, 
$C$ is not a double circulant code because $C_2\ne 0$.

(2) If we take 
$C_1=FH e_2$, $C_2=FH \overline e_2$, $C_{12}=FH e$ and $g=2e$,
then $g\overline g=-e$, and by Theorem \ref{self-dual C},
$C=(C_1\times C_2)\oplus\widehat C_{12}\le(FH)^2$
is a self-dual $2$-quasi-cyclic code; 
and by Theorem \ref{2-quasi consta-dihedral}, $C$ is a consta-dihedral code.
But, by Theorem \ref{generator matrix}, 
$C$ is not a double circulant code because $C_2\ne 0$.

(3) Let $C_1=C_2=0$, $C_{12}=FH$ and 
$g=\alpha e_0+\beta e_1+\gamma e_2+\delta\overline e_2$ 
with $\alpha,\beta,\gamma,\delta\in F$ such that 
$\alpha^2=\beta^2=\gamma\delta=-1$.
Then $g\in (FH)^\times$ and $g\overline g=-1$, 
and $C=(C_1\times C_2)\oplus\widehat C_{12}=\widehat C_{12}=FH(1,g)\le(FH)^2$
is a self-dual $2$-quasi-cyclic code,  is a consta-dihedral code,
and has a generator matrix~$\big(I~\,g(P)\big)$ where 
$g(P)$ is defined in Eq.\eqref{FH cir}.
\end{example}

We conclude the paper by few remarks.

\begin{remark}\rm
(1) The main contribution of this paper is that we characterized 
the structure of $2$-quasi-cyclic codes over finite fields by Goursat Lemma; 
and with the characterization, we described the exact relationships
between the three classes of codes:  (self-dual) $2$-quasi-cyclic codes, 
(self-dual) double circulant codes, and (self-dual) dihedral codes.   
Such a method seems powerful for some studies on $2$-quasi-cyclic codes.   

(2) Many arguments in this paper are relied on the ``bar'' map $a\to\overline a$.
In fact, as polynomial $a=a(x)\in FH$, $\overline a(x)$ is related to
the so-called {\em reciprocal polynomial} $a^*(x)$ very closely; 
at least, they have roots in common. An advantage of $\overline a(x)$ is that
the ``bar'' map is an automorphism of the $F$-algebra~$FH$.  

(3) For the consta-dihedral group algebra $F{*}G$, 
we can put the group $\tilde G$ aside and define $F{*}G$ 
by Eq.\eqref{twisted} directly over any finite field $F$
(${\rm char}F=2$ or not). 
Then we can identify $F{*}G$ 
with the dihedral group algebra $FG$ once ${\rm char}F=2$, 
because $\tilde y^2=-1=1$ once ${\rm char}F=2$.
Up to this unification, 
Theorem \ref{t char odd} and Theorem \ref{c char odd}
(for any finite field), respectively,
cover Theorem~\ref{t char even} and Theorem~\ref{c char even}, respectively.

(4) For Theorem \ref{c char odd} 
(for any finite field in the sense of the remark (3) above), we remark two points.
(i) The equivalences each other of (3), (4), (5), (6) of the theorem are known, where
the equivalences between (3), (4), (5) are clear by Corollary \ref{c C^bot}, 
and the equivalence of (5) and (6) was undergoing from \cite{P92}, \cite{KR}
to \cite{L PhD}, \cite{LLC}. 
(ii) Our contribution is that we found the equivalence of (1), (2) 
to the others. As a comparison, (1), (2) are about $2$-quasi-cyclic codes,
whereas (3), (4), (5), (6) are about only cyclic codes.   
\end{remark}

\section*{Acknowledgements}
The research of the first author is supported 
by NSFC with grant number 12171289.

\end{document}